\title{Fundamental limitations to no-jerk gearshifts of multi-speed transmission architectures in electric vehicles}
\author{Marc-Antoine Beaudoin \thanks{Corresponding author: ma.beaudoin@mail.mcgill.ca} \thanks{Intelligent Automation Lab, Centre for Intelligent Machines, McGill University, Room 503, McConnell Engineering Building, 3480 University Street, Montreal, QC, Canada, H3A 0E9} \and Benoit Boulet \footnotemark[2]}
\date{\today}
\begin{document}
\maketitle

\begin{abstract}
Multi-speed transmissions can enhance the performance and reduce the overall cost of an electric vehicle, but they also introduce a challenge: avoiding gearshift jerk, which may sometimes prove to be impossible in the presence of motor and clutch saturation. In this article, we introduce three theorems that explicitly define the fundamental limitations to no-jerk gearshifts resulting from motor or actuator saturation. We compare gearshifts that consist of transferring transmission torque from one friction clutch to another, to the case in which one of the clutches is a one-way clutch. We show that systems with a one-way clutch are more prone to motor saturation, causing gearshift jerk to be more often inevitable. We also study the influence of planetary gearsets on the gearshift dynamical trajectories, and expose the impact on the no-jerk limitations. This work offers tools to compare transmission architectures during the conceptual design phase of a new electric vehicle. \\

\textbf{Keywords}: electric vehicle, multi-speed transmission, transmission architecture, gearshift trajectory, gearshift jerk
\end{abstract}

\section{Introduction}
Uninterrupted gearshifts are a desirable feature of electric vehicles, but not all multi-speed transmissions are capable of it. To provide this capability, vehicle design engineers must select a transmission architecture in which the motor torque can be continuously transferred from one transmission path to another during gearshifts. But in the presence of motor and clutch saturation, even such a transmission can fail to provide an uninterrupted gearshift. In this article, we explore the fundamental limitations on gearshift performance that originate from actuator saturation.
\\

These fundamental limitations should be considered early in an electric vehicle design process, such as when selecting the transmission type during the conceptual design phase. Established design methodologies attribute a high importance to the conceptual design phase, as its outcome has a large influence on the rest of the design project, and ultimately, the product quality~\cite{ulrich_product_2012,pahl_engineering_2007,french_conceptual_1999}. We wish to provide electric vehicle design engineers with clear expectations on the potential gearshift performance of various transmission architectures, before they delve into resource-intensive detailed modelling.

\subsection{Review of powertrain architectures}
The literature abounds with powertrain concepts, each with the potential to meet specific client needs~\cite{wu_powertrain_2015,yang_state---art_2016}. Perhaps the simplest architecture is using a single motor and a fixed reduction ratio between the motor and the wheels. This concept has an excellent drivability, but it introduces significant drawbacks on the vehicle design: to meet vehicular performance specifications, the motor is often oversized, and the resulting powertrain only seldom operates in its optimal efficiency region. This becomes especially problematic for heavier vehicles.
\\

A natural evolution of the single-motor fixed-ratio concept is the introduction of a multi-speed transmission. A simple concept is the manual transmission~\cite{resele_advanced_1995,lei_control_2019}. It consists of mounting gears on bearings, and selectively locking different gears to their transmission shafts to achieve different transmission ratios. Because electric motors do not need to idle, it is possible to use a manual transmission without a clutch between the motor and the transmission. To shift gears, the motor torque is first reduced to zero, then the first gear is disengaged, the motor speed is synchronized with the second gear, the second gear is engaged, and finally the motor driving torque is reapplied. Synchronizers may also help with the shaft synchronization and gear engagement~\cite{alizadeh_robust_2014,tseng_advanced_2015,mo_shift_2021,mo_dynamic_2021}. If properly performed, such a gearshift can have a low jerk level, but a torque gap is inevitable, as the shifting elements have to be engaged and disengaged when no torque is passed through them. To reduce this torque gap, we can introduce a torque gap filler in the transmission architecture~\cite{amisano_automated_2014,galvagno_analysis_2011}. This is typically a clutch placed between the motor and the transmission output shaft; it is used only during gearshifts.
\\

Alternatives to manual transmissions used in electric vehicles are dual clutch transmissions~\cite{sorniotti_analysis_2012,hong_shift_2016,gao_gear_2015} and automatic transmissions~\cite{mousavi_seamless_2015,tian_modelling_2018,fang_design_2016,roozegar_design_2017}. Conceptually, they are almost equivalent. Both consist of offering clutching and braking devices that can be modulated such that the transmission's torque can be continuously transferred between different transmission paths. The distinction resides in that dual clutch transmissions typically use a parallel shaft architecture and only require two clutches, while automatic transmissions typically use a planetary gearset architecture and require more clutching and braking devices if more than two gear ratios are to be offered. In both parallel shaft and planetary architectures, it can be interesting to replace a friction clutch by a one-way clutch~\cite{sorniotti_analysis_2012, tian_modelling_2018,ye_optimal_2017}. A one-way clutch that transfers the transmission torque in a given ratio will automatically disengage when a friction clutch of a higher gear ratio is engaged. This automatic disengagement may also simplify the gearshift control algorithm, as one fewer clutch needs to be controlled. This concept also allows to continuously transfer the torque between the two transmission paths, but with the added benefits that a one-way clutch is cheaper and more compact than a friction clutch or a brake. 
\\

Instead of using a multi-speed transmission, we could circumvent the drawbacks of having a single-motor fixed-ratio powertrain by using a plurality of motors. The different motors can be mounted on different axles, where the driving torque is shared through the road. Each motor can either power a single wheel~\cite{murata_innovation_2012} or a front or rear axle~\cite{tang_control_2010}. Alternatively, the different motors can also be mounted on the same axle~\cite{tang_dual_2013}. These architectures allow the driving torque to be continuously transferred from one motor to another, thus providing excellent drivability. This is also true for multi-motor architectures with multi-speed transmissions. For instance, a planetary gearset architecture can be configured to receive inputs from two motors~\cite{wu_robust_2018,wu_efficiency_2018,hu_efficiency_2015,wu_driving_2021}. These transmission architectures are conceptually indistinguishable from power-split transmissions used in hybrid electric vehicles~\cite{miller_hybrid_2006}. Alternatively, a parallel shaft architecture can be configured to receive inputs from two motors~\cite{sorniotti_novel_2013,liang_shifting_2018,nguyen_shifting_2020}. Such a powertrain is capable of perfectly smooth gearshifts: if the driving torque is taken exclusively from one of the two motors, the other motor transmits no torque, which allows for an easy gear change on some transmission shafts. However, a torque gap may still exist, as the torque on one motor has to be reduced to zero and the other motor may not be able to fully compensate this torque decrease. 
\\

Finally, electric powertrains can also include mechanical continuously variable transmissions~\cite{bottiglione_energy_2014}. These powertrains provide excellent drivability, as the transmission ratio can be smoothly varied. However, there is a potential concern over a reduction in the powertrain energy efficiency~\cite{sorniotti_selection_2011}.
\\

From this review we conclude that concerns over drivetrain jerk are far greater when using a single-motor multi-speed transmission concept than any other powertrain concept. However, multi-speed transmissions remain good candidates for electric vehicle designers, as they may offer the best trade-off between conflicting requirements of cost, volume, weight, and energy efficiency for a given vehicle design. This motivates the focus of this study on gearshift jerk in single-motor multi-speed transmission powertrains.
\\

In this article, the terms ``no-jerk gearshift'' and ``uninterrupted gearshift'' are used interchangeably, given that the absence of gearshift jerk implies the absence of a torque gap. In some situations a torque gap is unavoidable, so engineers are left with balancing a trade-off between minimizing the torque gap and the gearshift jerk. This is always the case with manual transmissions, as the motor torque must be reduced to zero to allow the gear change. Rapidly reducing the motor torque reduces the torque gap length, but results in a larger gearshift jerk. The intended vehicle application influences how engineers balance the conflicting drivability criteria. However in this article, we do not address this tradeoff. Instead, we focus on identifying the situations where gearshift jerk and torque gaps can be eliminated altogether. Therefore, we further focus this study on dual-clutch and automatic transmissions, since their clutch torques can be modulated to provide uninterrupted gearshifts. But as will be demonstrated in this article, clutch nonlinearities and motor saturation result in fundamental limitations to no-jerk, uninterrupted gearshifts. 

\subsection{Methodology}
The existence of fundamental limitations to no-jerk gearshifts that originate from motor saturation was hinted in~\cite{sorniotti_analysis_2012}, but such limitations were never formulated explicitly. Some studies on gearshift jerk reduction are framed around gearshift trajectory optimization~\cite{golkani_optimal_2017,ye_multi-stage_2017}. Typically, authors model a driveline, formulate a cost function that balances vehicle jerk and clutch energy dissipation, then solve a trajectory optimization problem. The main caveat with this approach is that often the optimization problem is non-convex, so a global optimum is not guaranteed. Moreover, it is hard to transfer the results to other vehicles or to slight alterations of the transmission. Other studies address the design of an optimal gearshift controller~\cite{haj-fraj_optimal_2001,walker_powertrain_2017,gao_optimal_2015,kim_gear_2017}. But similarly, this approach does not allow to generalize, as we do not know whether the vehicle jerk is a result of an imperfect controller, or an unavoidable fundamental limitation.
\\

In this article, we take a different approach where we prove that under specific circumstances, motor and actuator saturation will always cause a torque interruption during gearshifts. More precisely, we present theorems that explicitly define the fundamental limitations to no-jerk gearshifts in electric vehicles. We begin this study by defining a general driveline and vehicle model in Section~\ref{sec:model}. Then, we impose a no-jerk kinematic constraint at the vehicle level, and with the model's equations, we can solve for the resulting conditions on the transmission output shaft. This defines conditions for a no-jerk gearshift, which we present in Definition~\ref{def:no-jerk}. Subsequently, we model various transmission types and obtain specific system equations. Finally in Section~\ref{sec:thm}, we obtain the theorems by identifying the possible values of the remaining variables in the specific system equations for which a gearshift would meet the no-jerk conditions on the output shaft, and motor and actuator saturation are avoided. To facilitate the visualization of the theorems, we also present example gearshift trajectories in Section~\ref{sec:traj}. We begin Section~\ref{sec:traj} with a motor selection process for an example vehicle, so that we get realistic motor limitations. Then, we use the specific system equations to compute example trajectories where we impose the no-jerk constraints. This helps illustrate the relation between the remaining variables in the specific system equations when the no-jerk constraints are met. The objective of Section~\ref{sec:traj} is for the reader to build an intuition before being presented with the theorems in Section~\ref{sec:thm}. 
\\

In this article, we do not quantify the level of jerk or torque interruption that would result from situations where the theorems indicate that a no-jerk gearshift is impossible. This would require a specific controller design, as well as complete and well calibrated vehicle model, (see, e.g., ~\cite{holdstock_linear_2013}) which also means that the conclusions would only apply to the specific vehicle studied. Also, in the interest of generality, we work under the assumption of perfect state feedback and a perfect control of actuator forces. Formally, this translates into Assumption~\ref{assum:limits}. In practice, other system limitations will influence the dynamics of a gearshift. Nevertheless, the theorems obtained in this article help predict or identify when motor or actuator saturation is a contributor to gearshift jerk and torque interruption.
\begin{assumption} \label{assum:limits}
	Only the following two system limitations can lead to unavoidable gearshift jerk: 
	\begin{enumerate}
		\item a limit on the motor torque, which can be characterized both in terms of maximal torque $\tmax$, or maximal power $\pmax$;
		\item a limit on the torque application rate of a friction clutch, $\dTdt$. \\
	\end{enumerate}
\end{assumption}

\subsection{Contributions}
The contributions in this article include three theorems that explicitly define when an uninterrupted gearshift can be obtained for various architectures of multi-speed transmissions for electric powertrains. More specifically, Theorem~\ref{Thm:OWC} provides a necessary and sufficient condition for a no-jerk upshift when the motor operates in the power limited region, and the transmission is a dual-clutch transmission with a one-way clutch. Theorem~\ref{Thm:DCT} provides a necessary and sufficient condition for a no-jerk gearshift under the same upshift scenario, but for a dual-clutch transmission with two friction clutches. This allows to demonstrate the additional constraints on gearshift trajectories -- and therefore, the additional no-jerk limitations -- that are implied by the use of a one-way clutch. Theorem~\ref{Thm:Dwn} provides a necessary and sufficient condition for a no-jerk downshift when the motor operates in the torque limited region, also for dual-clutch transmissions. Subsequently, we show that these three theorems can be adapted to define the limitations to no-jerk gearshifts when using an automatic transmission instead.

These theorems can be used by automotive engineers to compare the potential drivability of various transmission types, thereby providing additional information during a transmission selection process for a vehicle conceptual design phase. The theorems can also be used for the design of a gearshift controller: prior to initiating a gearshift, we can now predict whether a no-jerk gearshift can be obtained under the current driving situation, or if an uncontrolled vehicle jerk is likely to occur as a result of sudden motor saturation.

\section{Vehicle, driveline, and transmission modelling} \label{sec:model}
In this section, we first present a generic vehicle and driveline model. We also introduce several transmission models, which are then embedded in the generic driveline model to compute the gearshift trajectories of Section~\ref{sec:traj}.  

\subsection{Generic vehicle and driveline models}
\begin{figure}
	\centering
	\includegraphics[width=0.7\linewidth,trim={1.2cm 22cm 7.4cm 2.5cm},clip]{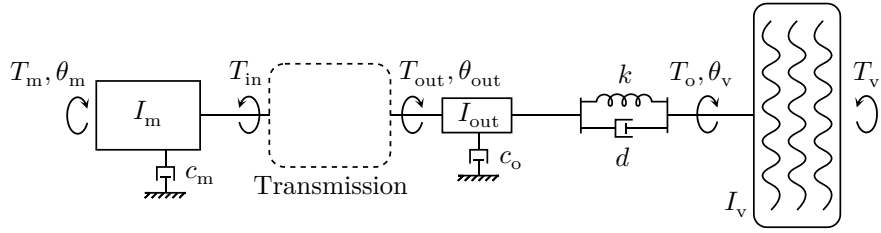}
	\caption{General driveline and vehicle model}
	\label{fig:general_model}
\end{figure}

We begin with the generalized vehicle model. We assume that the vehicle longitudinal speed $v$ follows the driving wheel speed $\tpv$ according to $v = \tpv \rw$, where $\rw$ is the wheel radius. This allows to project the vehicle mass and the vehicular forces on the wheel coordinate. The vehicle mass $m$ becomes an equivalent rotational inertia $\Iv = m \rw^2$. In this model, we consider three vehicular forces: the aerodynamic drag $F_{\mathrm{aero}}$, the tire rolling resistance $F_{\mathrm{tire}}$, and gravity $F_{\mathrm{slope}}$. These three forces become an equivalent torque $\Tv$ applied on the vehicle wheel as follows:
\begin{align}
\Tv &= \rw(F_{\mathrm{aero}} + F_{\mathrm{tire}} + F_{\mathrm{slope}}), \\
\Tv &= \rw(\frac{1}{2} \rho A_\mathrm{f} C_\mathrm{d} v^2 + m g C_\mathrm{r} \cos{\alpha} + m g \sin{\alpha}), \label{Eq:forces}
\end{align}
where $\rho$ is the air density, $A_\mathrm{f}$ is the vehicle frontal area, $C_\mathrm{d}$ is the aerodynamic drag coefficient, $C_\mathrm{r}$ is the tire rolling resistance coefficient, $g$ is gravity, and $\alpha$ is the road slope. 

We use the driveline model shown on Figure~\ref{fig:general_model}. We have three rotating bodies: the electric motor $\IIm$, the transmission output shaft $\Iout$, and the equivalent vehicle inertia $\Iv$. The three equations of motion for the general driveline and vehicle model are
\begin{align}
\IIm \tppm &= - \cm \tpm + \Tm - \Tin, \label{Eq:gen1}\\
\Iout \tppout &= - \co \tpout + \Tout - k(\tout - \tv) - d(\tpout - \tpv), \label{Eq:gen2}\\
\Iv \tppv &= - \Tv + k(\tout - \tv) + d(\tpout - \tpv),  \label{Eq:gen3}
\end{align}
where $\cm$ is the coefficient of viscous damping on the motor, and $\co$ is the coefficient of viscous damping on the transmission output. The coefficients $k$ and $d$ represent lumped driveline stiffness and damping, respectively. They cover phenomena such as driveshaft and tire flexibility and damping. 
\\

In this article, we are interested in finding gearshift trajectories that avoid vehicle jerk. Thus, we impose the kinematic constraint that $\tppv = \ar/\rw$, where $\ar$ is a prescribed and constant vehicle acceleration. By extension, we have that $\tpv = (\ar t + \vi)/\rw$, where $\vi$ is the initial vehicle speed at the beginning of the gearshift $(t=0)$. Moreover, the trajectories we study only take place for a short amount of time -- approximately 0.5\,s -- so it is fair to assume the vehicular forces $\Tv$ are constant. Given the driveline model in Equations~\ref{Eq:gen1}-\ref{Eq:gen3}, the constraints introduced on the $\tppv$, $\tpv$, and $\Tv$ variables also restrict $\tpout$ and $\tppout$. We can solve for these variables explicitly, as well as reduce the driveline model to a system of only two equations. First, the vehicular forces and acceleration are grouped into a single constant output torque $\To = \Iv \tppv + \Tv$.  Substituting $\To$ in Equation~\ref{Eq:gen3}, we get: 
\begin{equation}
	\To = k(\tout - \tv) + d(\tpout - \tpv). \label{Eq:To}
\end{equation}
Taking the time derivative of Equation~\ref{Eq:To}, we obtain:
\begin{equation}
	0 = k(\tpout - \tpv) + d(\tppout - \tppv). \label{Eq:kd}
\end{equation}
Substituting $\tppv$ and $\tpv$ with their no-jerk constraint in Equation~\ref{Eq:kd}, we obtain the following linear differential equation:
\begin{equation}
\tppout(t) + \frac{k}{d}\tpout(t) = \frac{k}{d \rw}(\ar t + \vi) + \frac{\ar}{\rw}.
\end{equation}
We can solve this equation using the initial condition that $\tpout(0) = \vi/\rw$, and we get:
\begin{equation}
\tpout(t) = \frac{\ar t + \vi}{\rw}. \label{Eq:tpoutt}
\end{equation}
We now use the prescribed trajectories on $\tpv$, $\tpout$, and $\To$ to define a no-jerk gearshift as follows.
\begin{definition} \label{def:no-jerk}
	A no-jerk gearshift is obtained if, for the duration of the gearshift,
	\begin{align}
	\tpout &= \tpv =  (\ar t + \vi)/\rw, \\
	\To &= \Iv \ar / \rw + \Tv.
	\end{align} 
\end{definition}

Finally, the general vehicle and driveline model can be reduced to only two equations of motion as follows:
\begin{align}
\IIm \tppm &= - \cm \tpm + \Tm - \Tin, \label{Eq:gen11}\\
\Iout \tppout &= - \co \tpout + \Tout - \To. \label{Eq:gen12}
\end{align}
In the next sections, we transform the general model of Equations~\ref{Eq:gen11}~and~\ref{Eq:gen12} into specific systems that depend on the transmission used, thereby replacing $\Tin$ and $\Tout$ by the relevant clutch, brake, and inertial torques. 

\subsection{Parallel shaft architecture with two frictional clutches }
\begin{figure}
	\begin{subfigure}[t]{.5\textwidth}
		\centering
		\includegraphics[width=0.8\linewidth,trim={2.0cm 23cm 12.2cm 2.0cm},clip]{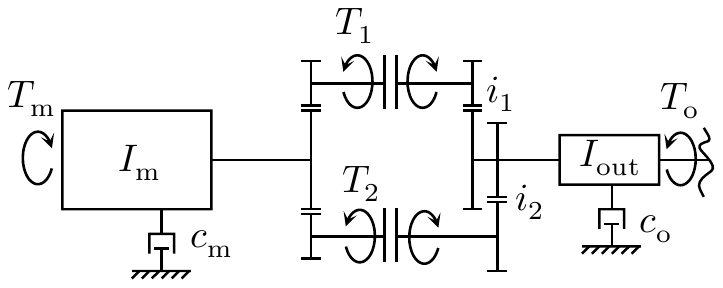}
		\caption{Two-speed transmission with parallel shaft architecture. Short name: dual-clutch transmission.}
		\label{fig:DCT}
	\end{subfigure}
	\begin{subfigure}[t]{.5\textwidth}
		\centering
		\includegraphics[width=0.8\linewidth,trim={2.0cm 23cm 12.0cm 2.3cm},clip]{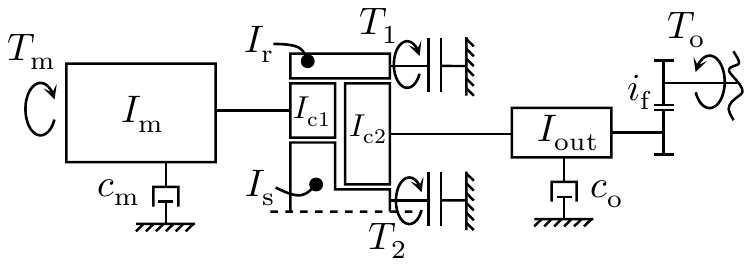}
		\caption{Two-speed transmission with a planetary gearset architecture. Short name: dual-brake transmission~\cite{mousavi_seamless_2015}.}
		\label{fig:DBT}
	\end{subfigure}
	\caption{The two powertrain architectures studied in this article.}
	\label{fig:transmissionTypes}
\end{figure}

The first system model we build is that of the architecture illustrated in Figure~\ref{fig:DCT}, when both clutches are frictional clutches. The equations of motion for this system are
\begin{align}
\IIm \tppm &= - \cm \tpm + \Tm - T_1 - T_2, \label{Eq:dct1}\\
\Iout \tppout &= - \co \tpout - \To + i_1 T_1 + i_2 T_2, \label{Eq:dct2}
\end{align}
where the clutch torques $T_1$ and $T_2$ depend on the state of the clutch -- i.e., stick or slip. We adopt the Coulomb friction model and obtain the clutch torques as follows:
\begin{align}
T_1 &= \begin{cases}
\Tm - T_2 -\IIm \tppm - \cm \tpm & \tpm = i_1 \tpout \\
\Fnone \mud \Ra n \sign(\tpm-i_1 \tpout) & \tpm \neq i_1 \tpout
\end{cases}, \label{Eq:T1}\\[5pt]
T_2 &= \begin{cases}
\Tm - T_1 -\IIm \tppm - \cm \tpm & \tpm = i_2 \tpout \\
\Fntwo \mud \Ra n \sign(\tpm-i_2 \tpout) & \tpm \neq i_2 \tpout
\end{cases},
\end{align}
where $F_\mathrm{n}$ is the linear force at the clutch plates, $\mud$ is the clutch's dynamic friction coefficient, $\Ra$ is the mean friction radius, and $n$ is the number of friction surfaces. The clutch starts to slip when the reaction torque at the interface reaches the clutch torque capacity $T_\mathrm{cap} = F_\mathrm{n} \mu_\mathrm{s} \Ra n$, where $\mu_\mathrm{s}$ is the static friction coefficient.

\subsection{Parallel shaft architecture with a one-way clutch}
If the first gear clutch is replaced by a one-way clutch in the system of Figure~\ref{fig:DCT}, the same equations of motion apply, namely Equations~\ref{Eq:dct1}~and~\ref{Eq:dct2}. But $T_1$ is a reaction torque in one direction, and it is null in the other direction: 
\begin{equation}
T_1 = \begin{cases}
\Tm - T_2 -\IIm \tppm - \cm \tpm & \tpm = i_1 \tpout \\
0 & \tpm < i_1 \tpout
\end{cases}. \label{Eq:torqueOWC}
\end{equation}
The one-way clutch also introduces the kinematic constraint $\tpm \leq i_1 \tpout$.

\subsection{Planetary gearset architecture}
In general, planetary gearsets have more degrees of freedom than parallel shaft architectures, thus they have more equations of motion. A single planetary stage can be seen as a combination of three bodies: a ring gear with inertia $\Ir$, a planet carrier ($\Ic$), and a sun gear ($\Is$). The equations of motion for each of these bodies in a single stage are
\begin{align}
\Ir \tppr &= \Tr - \Nr F, \label{Eq:Pl1}\\
\Ic \tppc &= \Tc + \Nr F + \Ns F, \label{Eq:Pl2}\\
\Is \tpps &= \Ts - \Ns F, \label{Eq:Pl3}
\end{align}
where $T_\square$ is the torque applied on either the ring (r), carrier (c), or sun (s), $\Nr$ is the radius of the ring gear, $\Ns$ is the radius of the sun gear, and $F$ is the tooth force in the gearset. There is also a kinematic constraint associated with these equations:
\begin{equation}
\Ns \tps + \Nr \tpr = (\Ns + \Nr) \tpc. \label{Eq:Plspeed}
\end{equation}
Equations~\ref{Eq:Pl1}-\ref{Eq:Plspeed} are commonly used in analyses of automatic transmissions, see \cite{bai_dynamic_2013} for instance. It is worth noting that these equations imply approximating as null the rotational inertia of the planet gears. Meanwhile, the mass of the planet gears can be considered in the equations by projecting them into the rotational inertia for the planet carrier $\Ic$.
\\

\begin{figure}
	\centering
	\includegraphics[width=0.3\linewidth,trim={1.7cm 23.5cm 16.2cm 2.5cm},clip]{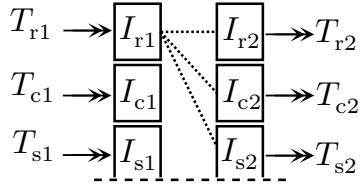}
	\caption{General representation of a double planetary gearset, where we show possible connections between the ring of the first set and elements of the second set. To operate as a transmission, a double planetary gearset must have three connections in total, which we obtain either by connecting elements together, or by grounding elements to the transmission casing. To change the transmission ratio, we simply change one of these connections. }
	\label{fig:planetary}
\end{figure}

It is also common to combine planetary gearsets in series, such as in Figure~\ref{fig:planetary}. In this case, another set of three equations of motion are added to the system (Equations~\ref{Eq:Pl1}~to~\ref{Eq:Pl3}), as well as another kinematic constraint (Equation~\ref{Eq:Plspeed}). By connecting elements, we reduce the number of degree of freedom in the system. These connections must be done carefully, as the system can become over-constrained or under-constrained. Transmission designers typically study numerous possible configurations before choosing the most suitable ones -- a process named transmission synthesis~\cite{dagci_hybrid_2018-1}, for which various methods exist ranging from using the classic lever analogy~\cite{liu_synthesis_2018}, to bond graphs~\cite{bayrak_topology_2016}. A Ravigneaux planetary gearset can be seen as a special case of a double planetary gearset with its specific set of equations. 
\\

We are now ready to build the equations for the system in Figure~\ref{fig:DBT}, which consists of a specific instance of a double planetary gearset architecture~\cite{mousavi_seamless_2015}. The inertias $\IIm$ and $\Icone$ are lumped into a single mass; we do the same for $\Iout$ and $\Ictwo$. We obtain the following equations of motion:
\begin{align}
\Ir \tppr &= T_1 - \Nrone F_1 - \Nrtwo F_2, \label{Eq:DBTEOM1}\\
\IIm \tppm &= - \cm \tpm + \Tm + \Nrone F_1 + \Nsone F_1, \\
\Iout \tppout &= - \co \tpout - \IF^{-1} \To + \Nrtwo F_2 + \Nstwo F_2, \\
\Is \tpps &= T_2 - \Nsone F_1 - \Nstwo F_2, \label{Eq:DBTEOM4}
\end{align}
and the following kinematic constraints:
\begin{align}
\Nsone \tps + \Nrone \tpr &= (\Nsone + \Nrone) \tpm, \label{Eq:DBTc1}\\
\Nstwo \tps + \Nrtwo \tpr &= (\Nstwo + \Nrtwo) \tpout. \label{Eq:DBTc2}
\end{align}
Using Equations~\ref{Eq:DBTc1}~and~\ref{Eq:DBTc2}, we can reduce the four equations of motion (Equations~\ref{Eq:DBTEOM1}~to~\ref{Eq:DBTEOM4}) into a set of two equations of motion. In order to solve this algebraic problem, researchers have assumed that elements other than the input and output shafts have negligible inertias, which simplifies the reduction process~\cite{bai_dynamic_2013}. For the system of Figure \ref{fig:DBT}, it would mean that $\Ir = 0$ and $\Is = 0$. When we make these assumptions, and introduce parameters $\beta_1 = \Nrone/ \Nsone$ and $\beta_2 = \Nrtwo/ \Nstwo$, we get a set of equations that is identical in form to that of a parallel shaft architecture, only with different coefficients. This can be observed by comparing the equations~\ref{Eq:DBT_simp1}~and~\ref{Eq:DBT_simp2} to the equations~\ref{Eq:dct1}~and~\ref{Eq:dct2}.
\begin{align}
\IIm \tppm &= - \cm \tpm + \Tm + \frac{1 + \beta_1}{\beta_1 - \beta_2} T_1 - \frac{\beta_2(1 + \beta_1)}{\beta_1 - \beta_2} T_2, \label{Eq:DBT_simp1}\\
\Iout \tppout &= - \co \tpout - \IF^{-1} \To + \frac{1 + \beta_2}{\beta_2 - \beta_1} T_1 - \frac{\beta_1(1 + \beta_2)}{\beta_2 - \beta_1} T_2. \label{Eq:DBT_simp2}
\end{align}

On the other hand, if we do not assume $\Ir = \Is = 0$, we get a different system of equations (i.e., Equations~\ref{Eq:DBTcom1}~and~\ref{Eq:DBTcom2}), more coupled this time. The constant coefficients $C_1$ to $C_8$ in these two equations are not detailed further since they are rather involved algebraic expressions that only pertain to the specific architecture of Figure~\ref{fig:DBT}. The important thing to realize is that the motor acceleration is now coupled with the transmission output.
\begin{align}
\IIm \tppm = C_1 (- \cm \tpm + \Tm) + C_2 (- \co \tpout - \IF^{-1} \To) + C_3 T_1 + C_4 T_2, \label{Eq:DBTcom1}\\
\Iout \tppout = C_5 (- \cm \tpm + \Tm) + C_6 (- \co \tpout - \IF^{-1} \To) + C_7 T_1 + C_8 T_2. \label{Eq:DBTcom2}
\end{align}

\section{Gearshift trajectories for an example vehicle} \label{sec:traj}
In Section~\ref{sec:motor}, we obtain realistic motor limitations by mimicking a motor selection process for an example vehicle. Then in Section~\ref{sec:extraj}, using the transmission models obtained in Section~\ref{sec:model}, we illustrate the resulting limitations on gearshift dynamical trajectories. The example vehicle we hypothesize is a commercial vehicle for carrying goods. With a gross vehicle mass of 8500\,kg, it would be classified as an N2 commercial vehicle in Europe, and a Class 5 medium-duty truck in North America. The vehicle and transmission parameters used are shown in Table~\ref{tbl:Params}. Note that the gross vehicle mass of 8500\,kg is used for the motor selection process, while the half-payload mass of 6500\,kg presented in Table~\ref{tbl:Params} is used for the gearshift trajectories.

\begin{table}
	\centering
	\caption{Example vehicle parameters}\label{tbl:Params}
	\begin{tabular}{C{1.5cm} L{2.5cm} | C{1.5cm} L{2.5cm} | C{1.5cm} L{2.5cm}}
		    Param.     & Value                         & Param.  & Value                             & Param.    & Value                          \\ \hline
		     $m$       & 6500\,kg                      & $\Iout$ & 0.05\,kg\,m\textsuperscript{2}    & $i_2$     & 6                              \\
		    $\rw$      & 0.3\,m                        & $\cm$   & 0.02\,Nm\,s/rad & $\Ir$     & 0.03\,kg\,m\textsuperscript{2} \\
		$A_\mathrm{f}$ & 6\,m\textsuperscript{2}       & $\co$   & 0.04\,Nm\,s/rad & $\Is$     & 0.03\,kg\,m\textsuperscript{2} \\
		$C_\mathrm{d}$ & 0.7                           & $k$     & 10\,kNm/rad & $\beta_1$ & 2                              \\
		$C_\mathrm{r}$ & 0.007                         & $d$     & 75\,Nm\,s/rad   & $\beta_2$ & 4                              \\
		    $\IIm$     & 0.3\,kg\,m\textsuperscript{2} & $i_1$   & 12                                & $\IF$     & 7.2                            \\ \hline
	\end{tabular}
\end{table}

\subsection{Motor selection} \label{sec:motor}
We set the three design specifications of Table~\ref{tbl:DesignScenarios} to define the vehicle performance requirements. The first specification consists of an extreme grade; this sets the maximal torque requirement. This is a short duration event, so we allow the powertrain to operate above its continuous capacity limit. The second specification consists of the vehicle cruising on the highway, which sets both a wheel speed requirement and a continuous power requirement. The third specification happens when the vehicle climbs a steep but reasonable grade on the highway, which sets the maximal power requirement.
\\

As shown in Figure~\ref{fig:MAP1}, if the vehicle is equipped with a single-speed transmission, the vehicle requirements can be met with a 200\,kW motor with 700\,Nm peak torque and 8000\,rpm maximum speed, using a fixed total reduction ratio of 7.5. With a two-speed transmission with ratios of 12 and 6, we can lower the peak torque requirement to 450\,Nm, while keeping the same power and speed limit requirements. In practice, this would allow vehicle designers to reduce the motor's active length~\cite{goss_design_2013}, thereby reducing the motor peak torque capacity while maintaining the same motor power capacity. We display the resulting system capacity in Figure~\ref{fig:MAP2}. The vehicle now has a 36 \% volumetrically smaller motor and a larger high-efficiency operating region. 

\begin{table}
	\centering
	\caption{Design specifications considered in the motor selection}\label{tbl:DesignScenarios}
	\begin{tabular}{l | c c l}
		Design specification     & $v$ [km/h] & $\alpha$ [\%] & Duration   \\ \hline
		1: extreme grade         & 20         & 20            & $<$ 1\,min \\
		2: highway, cruise speed & 110        & 0             & Continuous \\
		3: highway, high grade   & 90         & 5             & $<$ 1\,min \\ \hline
	\end{tabular}
\end{table}

\begin{figure}
	\begin{subfigure}[b]{.5\textwidth}
		\centering
		\includegraphics[width=0.9\linewidth,trim={0cm 0cm 0.9cm 0.5cm},clip]{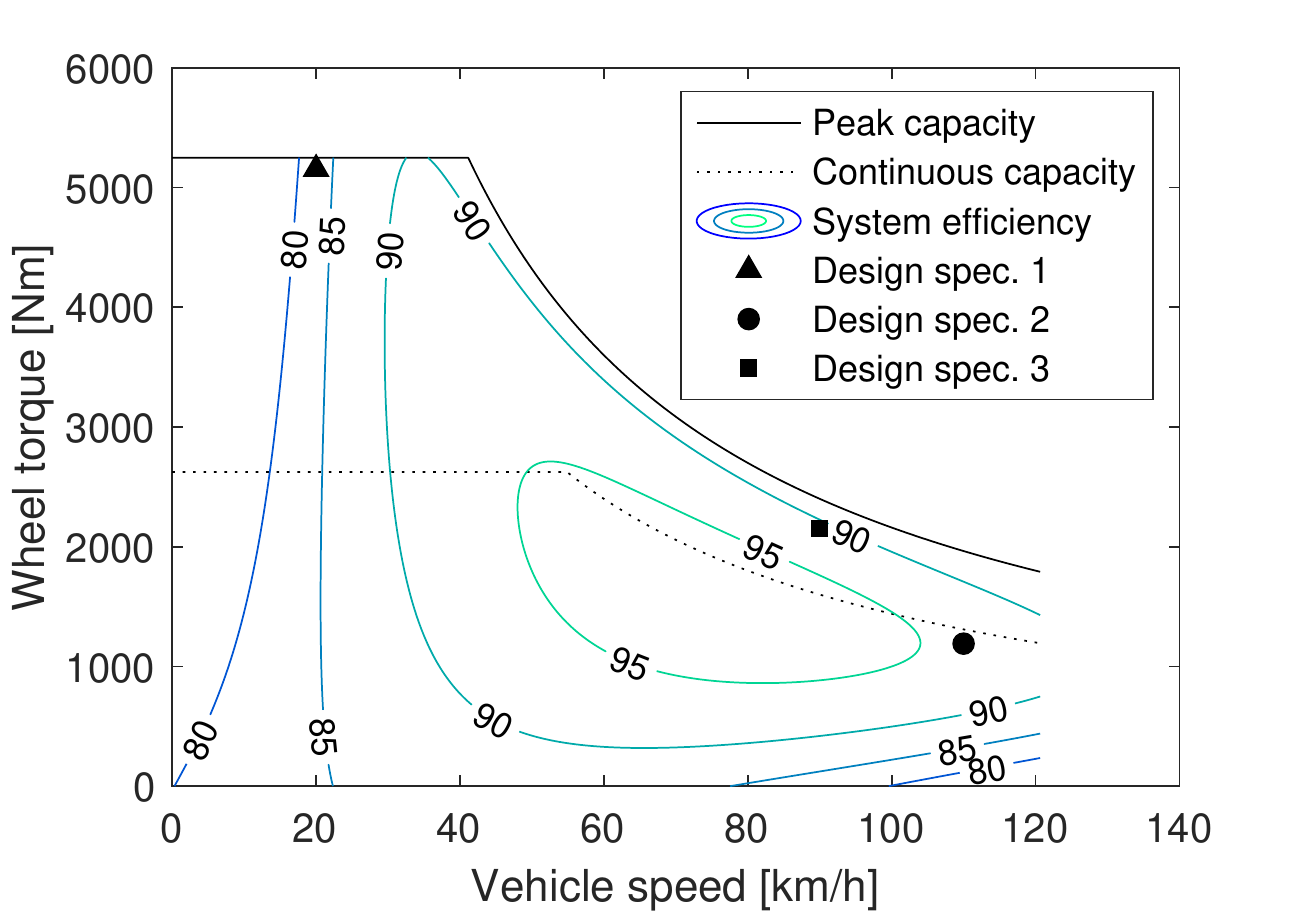}
		\caption{Single speed transmission, ratio of 7.5. Motor requirements: 200\,kW power, 700\,Nm peak torque, and 8000\,rpm maximum speed.}
		\label{fig:MAP1}
	\end{subfigure}
	\begin{subfigure}[b]{.5\textwidth}
		\centering
		\includegraphics[width=0.9\linewidth,trim={0cm 0cm 0.9cm 0.5cm},clip]{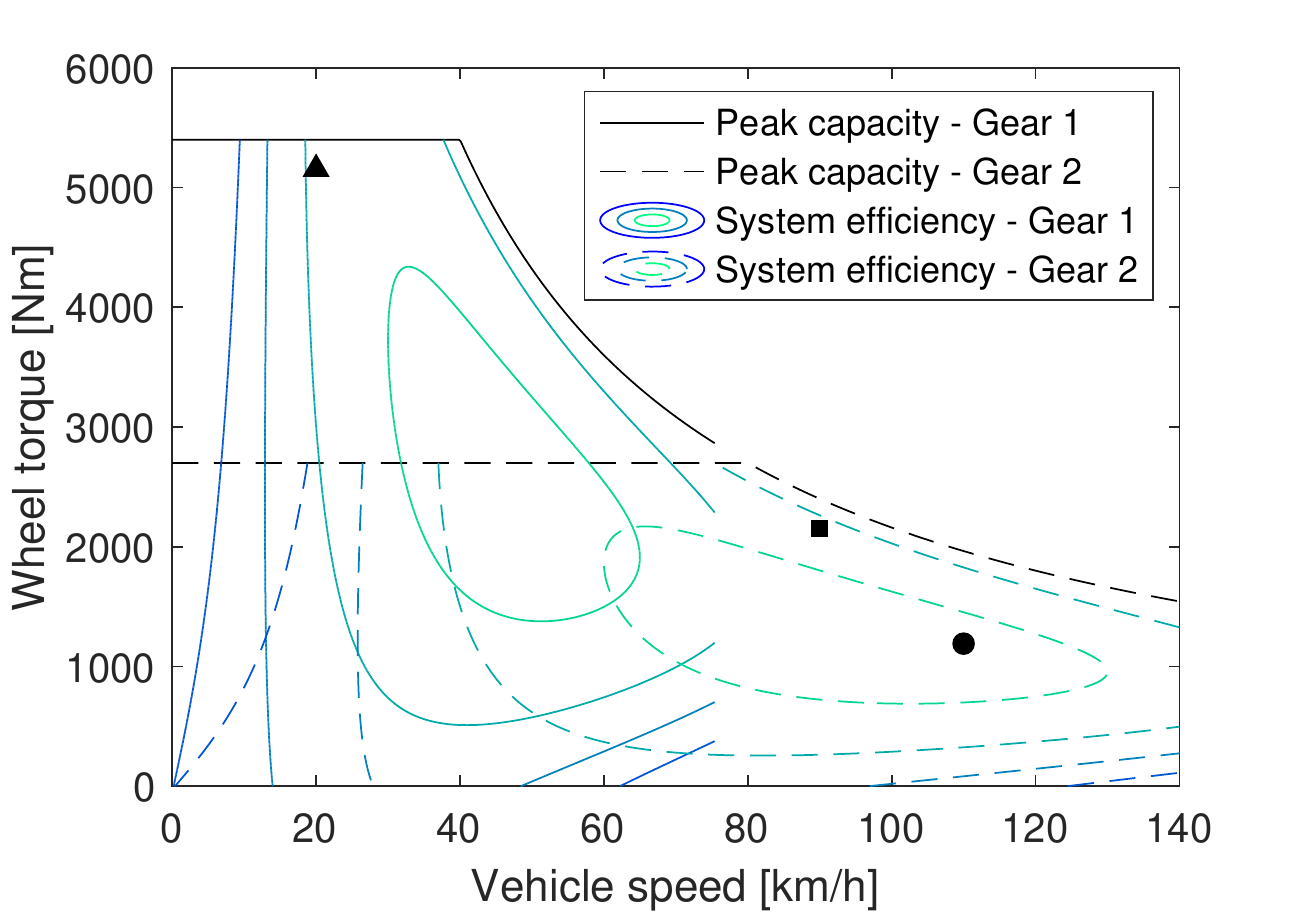}
		\caption{Two-speed transmission, ratios of 12 and 6. Motor requirements: 200\,kW power, 450\,Nm peak torque, and 8000\,rpm maximum speed.}
		\label{fig:MAP2}
	\end{subfigure}
	\caption{Vehicle capacity for (a) a single speed transmission and (b) a two-speed transmission.}
	\label{fig:MotorDesign}
\end{figure}

\subsection{Gearshift Trajectories} \label{sec:extraj}
\begin{table}
	\centering
	\caption{Gearshift scenarios}\label{tbl:GSScenarios}
	\begin{tabular}{c c c c | c c c}
		Scenario & Direction & Motor quadrant & Region         & $\vi$ [km/h] & $\ar$ [m/$s^2$] & DTD   \\ \hline
		   1     & Upshift   & Driving        & Power-limited  & 65           & $1.0$           & 80 \% \\
		   2     & Downshift & Driving        & Torque-limited & 18           & $1.0$           & 80 \% \\
		   3     & Downshift & Braking        & Torque-limited & 45           & $-1.5$          & -     \\ \hline
	\end{tabular}
\end{table}

In this section, we present five example gearshift trajectories. In all five examples, the no-jerk conditions of Definition~\ref{def:no-jerk} are imposed on the transmission output shaft. Our intention with these examples is to illustrate the implications on the input torques that result from the no-jerk conditions. More specifically, we study the three gearshift scenarios presented in Table~\ref{tbl:GSScenarios}. Scenario 1 is an upshift during vehicle acceleration, where the driver torque demand (DTD) at the beginning of the shift is 80 \% of available torque. This gearshift takes places in the power-limited region of the motor map. Scenario 2 is a downshift when the vehicle is accelerating, also with a DTD of 80 \%. The downshift takes place in the torque-limited region of the motor map. This gearshift is justified by the desire to have a greater wheel torque after the downshift. Scenario 3 is a downshift when the motor is used for regenerative braking, in the torque-limited region. 

The example trajectories were computed in \textsc{matlab}\texttrademark{} using the system models obtained in Section~\ref{sec:model}: Equations~\ref{Eq:dct1}-\ref{Eq:dct2} when simulating a dual clutch transmission, Equations~\ref{Eq:DBT_simp1}-\ref{Eq:DBT_simp2} for a dual-brake transmission where we neglect the gear inertias, and Equations~\ref{Eq:DBTcom1}-\ref{Eq:DBTcom2} for a dual-brake transmission where we include the gear inertias. First, the required $\To$ is computed considering the vehicular forces in Equation~\ref{Eq:forces} as well as the vehicle acceleration $\ar$ and initial velocity $\vi$, all prescribed by the driving scenario chosen from Table~\ref{tbl:GSScenarios}. Then, we compute the motor and clutch torques and speeds at the start of the gearshift from the model equations. Finally, we solve the system trajectories for each phase of the gearshift by imposing an arbitrary trajectory for some components, and using the model equations to solve for the trajectory of the other components. For example in a torque transfer phase, $\tpout$ follows the conditions for a no-jerk gearshift in Definition~\ref{def:no-jerk}, $\To$ is kept constant, $\tpm$ follows $\tpout$ multiplied by a given gear ratio, we impose a trajectory for $T_1$, and from the model equations, we can solve for $T_2$ and $\Tm$.
\\
\begin{figure}
	\centering
	\includegraphics[width=0.6\linewidth,trim={1.1cm 0.3cm 2.0cm 0.9cm},clip]{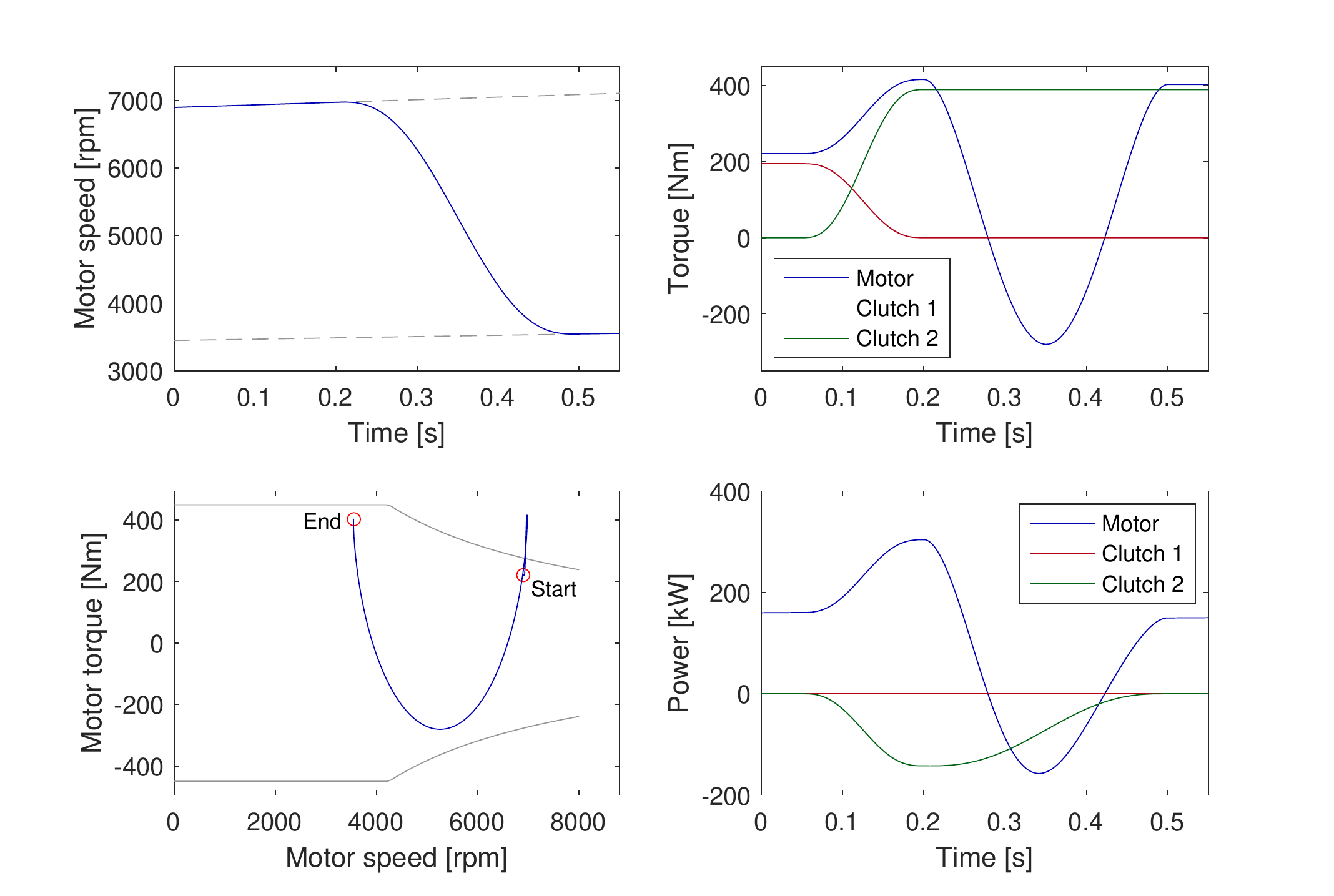}
	\caption{Example trajectory for the gearshift scenario 1 with a dual-clutch transmission, where the first clutch is a one-way clutch.}
	\label{fig:T01}
\end{figure}

The trajectory of Figure~\ref{fig:T01} is an upshift during vehicle acceleration (scenario 1) for the case where the transmission is a parallel-shaft architecture, and the first clutch is a one-way clutch. The gearshift starts at 0.05\,s and ends at 0.50\,s. It begins with a torque phase, where the transmission torque is transferred from clutch 1 to clutch 2, followed by an inertia phase, where the motor is synchronized with the gear 2 speed. During the torque phase, we gradually increase $T_2$ starting from zero torque, up to the torque required when the transmission is in gear 2, following an otherwise arbitrary trajectory. This has the effect of gradually reducing the reaction torque $T_1$ on the one-way clutch, down to zero torque at the end of the torque phase. In order to maintain a constant output torque $\To$ -- a condition for a no-jerk gearshift -- we also need to increase the motor torque $\Tm$. Furthermore, because the first clutch is a one-way clutch, this is the only possible trajectory for the system. Since the computed motor power exceeds its 200\,kW limit in this example, the trajectory is infeasible. In reality, the vehicle would inevitably experience a torque gap and vehicle jerk. This consists of the first fundamental limitation we explore in this article, see Theorem~\ref{Thm:OWC} in Section~\ref{sec:thm}, whose proof further details and formalizes the conclusions presented in this paragraph. The rest of the gearshift consists of the inertia phase, which is not as prone to motor saturation as the torque phase. $T_1$ and $T_2$ are kept constant, and the motor is synchronized with the gear 2 speed using an appropriate but arbitrary trajectory. To reduce the variation in motor power during the inertia phase -- for instance, to avoid the motor braking quadrant -- the inertia phase can simply be made longer.
\\
\begin{figure}
	\centering
	\includegraphics[width=0.6\linewidth,trim={1.1cm 0.3cm 2.0cm 0.9cm},clip]{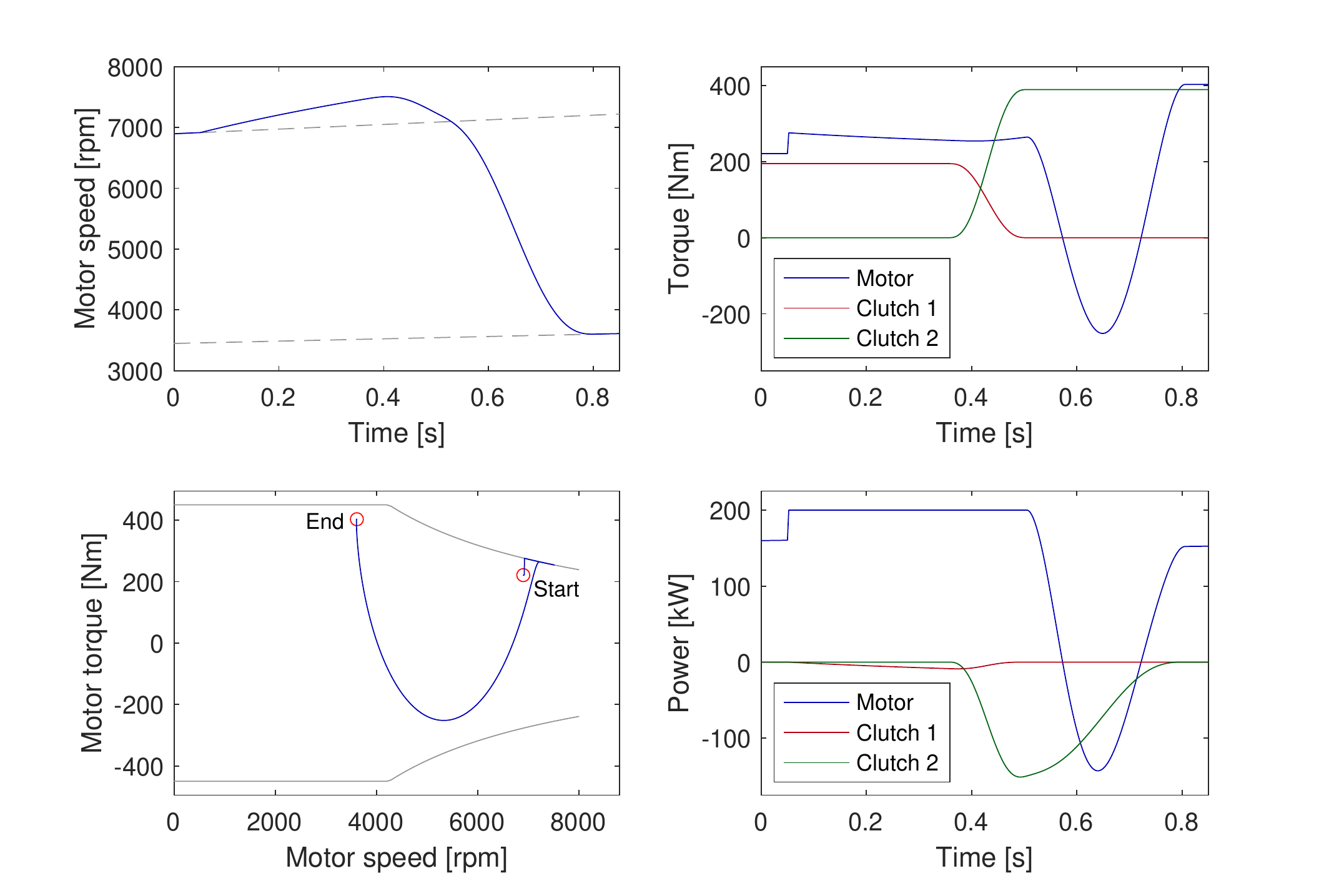}
	\caption{Example trajectory for the gearshift scenario 1 with a dual-clutch transmission, where the first clutch is a friction clutch.}
	\label{fig:T02}
\end{figure}

The trajectory of Figure~\ref{fig:T02} is also an upshift during vehicle acceleration for a parallel shaft architecture, but this time clutch 1 is a friction clutch. This introduces two new possibilities: we can increase the motor speed above the gear 1 speed, and we can modulate $T_1$ when the clutch is slipping. We take advantage of these possibilities and craft a new trajectory that results in a no-jerk gearshift. Prior to transferring the clutch torques, the motor speed is increased above the gear 1 speed, up to a prescribed value. Then begins the torque transfer, which has the effect of decreasing the motor speed. It is important that the torque transfer completes before the motor crosses the gear 1 speed, as this would result in a sign reversal on $T_1$. Recall that when a friction clutch slips, the clutch torque opposes the clutch slipping velocity, see Equation~\ref{Eq:T1} for instance. A torque reversal on $T_1$ would mean that $T_2$ has to instantaneously compensate in order to maintain the constant $\To$ condition. This is likely to violate any constraint on the clutch torque application rate. Therefore, this new trajectory also contains a fundamental limitation, which is formulated in Theorem~\ref{Thm:DCT}. 
\\
\begin{figure}
	\centering
	\includegraphics[width=0.6\linewidth,trim={1.1cm 0.3cm 2.0cm 0.9cm},clip]{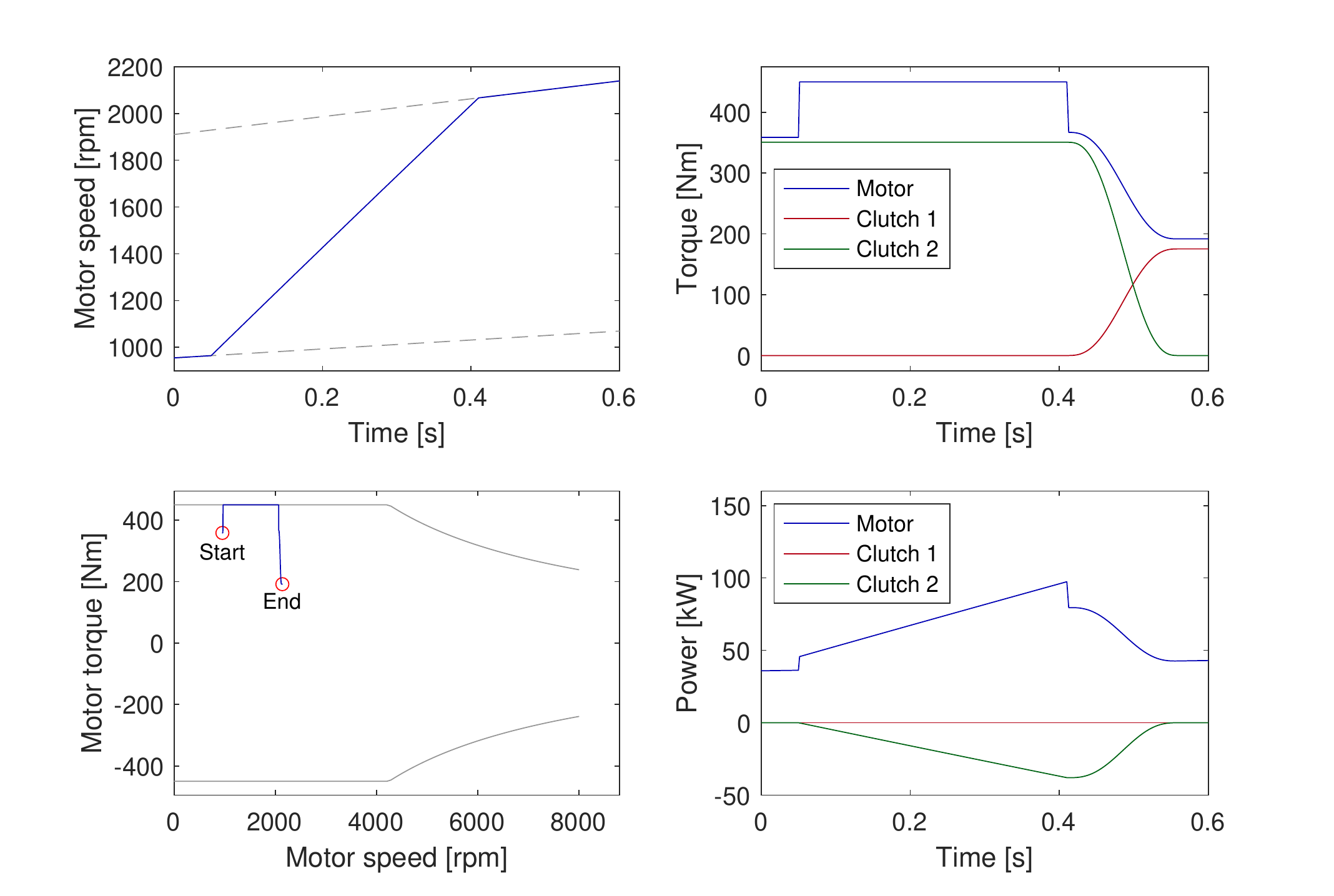}
	\caption{Example trajectory for the gearshift scenario 2 with a dual-clutch transmission. This trajectory is valid for both one-way clutch and dual-friction-clutch architectures.}
	\label{fig:T07}
\end{figure}

The trajectory of Figure~\ref{fig:T07} is a downshift during motor acceleration (scenario 2), with a parallel shaft architecture. This trajectory can be obtained for both the cases where the first clutch is a one-way clutch or a friction clutch. This time we proceed with the inertia phase before the torque phase. In the inertia phase, the motor speed is increased using maximal motor power. When the motor reaches gear 1 speed, the torque transfer begins. This time, the fundamental limitation consists of being able to synchronize the motor speed within an acceptable time while maintaining the output torque $\To$ constant. This limitation is formalized in Theorem~\ref{Thm:Dwn}. 
\\
\begin{figure}
	\centering
	\includegraphics[width=0.6\linewidth,trim={1.1cm 0.3cm 1.8cm 0.9cm},clip]{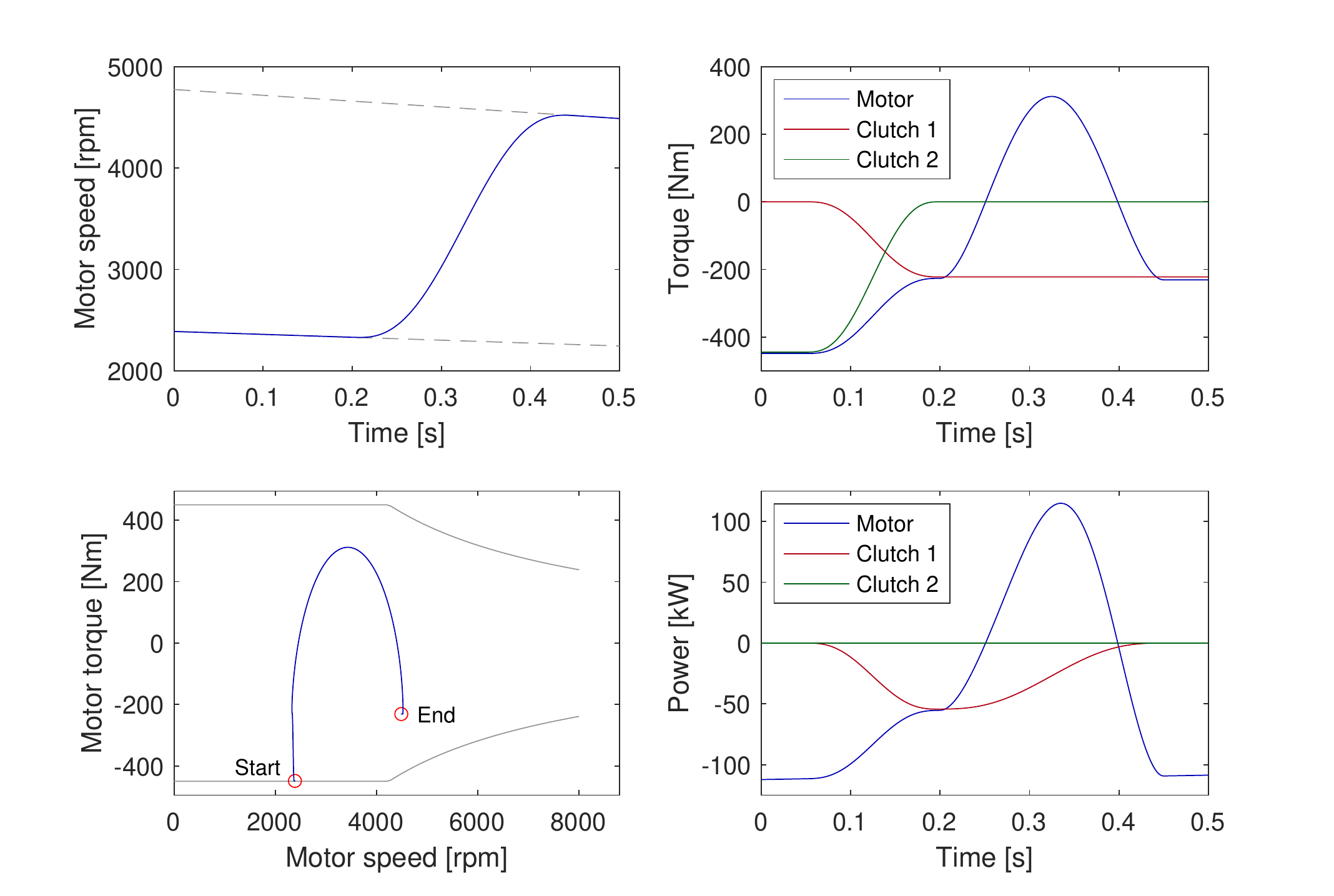}
	\caption{Example trajectory for the gearshift scenario 3 with a dual-clutch transmission. This trajectory is only possible if clutch 1 is of friction type.}
	\label{fig:T03}
\end{figure}

The trajectory of Figure~\ref{fig:T03} is a downshift during vehicle deceleration (scenario 3). More specifically, the motor operates in regenerative braking mode. We have that $\To<0$, which also means that $\Tm <0$, $T_1<0$, and $T_2<0$. Because $T_1<0$, this trajectory is only possible if clutch 1 is a friction clutch. A one-way clutch can only carry torque in one direction -- i.e., the positive direction in our case. Often, transmissions with a one-way clutch are also equipped with a locking mechanism in parallel to the one-way clutch~\cite{sorniotti_analysis_2012}. This allows for regenerative braking when the transmission operates in first gear. However, the locking mechanism typically cannot be engaged when there is a significant speed difference between the mating elements, and it cannot be modulated. Therefore, it is impossible for a dual-clutch transmission with a one-way clutch to provide uninterrupted shifting in regenerative braking mode. Figure~\ref{fig:T03} shows that for a dual friction clutch architecture, such a gearshift can be initiated even when the motor is essentially on the saturation limit. For transmissions with two friction clutches, the system limitations included in Assumption~\ref{assum:limits} will not result in gearshift jerk. 
\\
\begin{figure}
	\centering
	\includegraphics[width=0.6\linewidth,trim={1.1cm 0.3cm 1.8cm 0.9cm},clip]{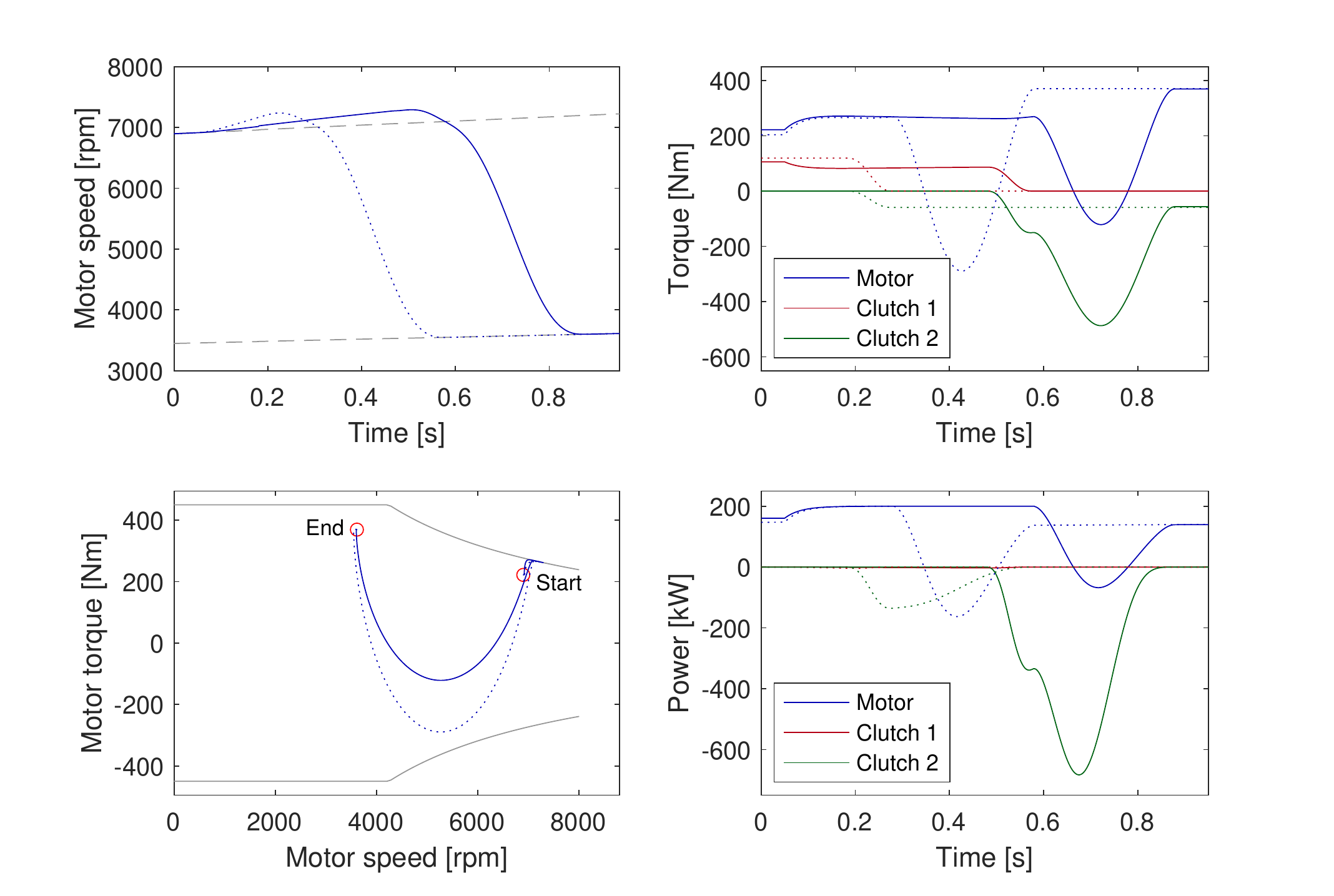}
	\caption{Example trajectories for the gearshift scenario 1 with a dual-brake transmission, namely that of Figure~\ref{fig:DBT}. The dotted lines represent the situation where the trajectory is computed with Equations~\ref{Eq:DBT_simp1}-\ref{Eq:DBT_simp2}, thereby assuming $\Ir = \Is = 0$. The solid lines represent the situation where we include the inertia of $\Ir$ and $\Is$ and compute the trajectory with Equations~\ref{Eq:DBTcom1}-\ref{Eq:DBTcom2}.}
	\label{fig:T05}
\end{figure}

In the trajectory of Figure~\ref{fig:T05}, we come back to gearshift scenario 1, but with the dual-brake transmission of Figure~\ref{fig:DBT} this time. In terms of trajectory, we follow the same strategy as that of Figure~\ref{fig:T02}, i.e., increasing the motor speed above the gear 1 speed before proceeding with the torque transfer. Computing the trajectory with Equations~\ref{Eq:DBT_simp1}-\ref{Eq:DBT_simp2}, which amounts to assuming $\Ir = \Is = 0$, we obtain the trajectory in dotted lines. But if we have that $\Ir = \Is = \IIm /10$ and compute the trajectory with Equations~\ref{Eq:DBTcom1}-\ref{Eq:DBTcom2}, we obtain the trajectory in solid lines. The introduction of a small inertia on $\Ir$ and $\Is$ has a strong influence on the resulting trajectory. First, we notice that $\Tm$ becomes coupled with $T_1$. In other words, when the motor torque is increased in the first phase of the gearshift, $T_1$ must be decreased in order to maintain the same output torque $\To$. This means $\Tm$ cannot increase  arbitrarily quickly due to the clutch torque application rate limitation. Also, we notice that the additional inertia increases the time required to attain a given motor speed during the first phase. Finally, clutch 2 takes a larger portion of the load during the inertia phase, and the motor takes a smaller portion of the load. From a design perspective, this means the maximal torque requirement on clutch 2 is higher, which has to be accounted for in the sizing of this component. In conclusion, even small sun or ring gear inertias can have a significant influence on the gearshift trajectory of a transmission with a planetary gearset architecture. Engineers should take precaution before neglecting them in the equations of motion for their systems.  

\section{Fundamental limitations to no-jerk gearshift} \label{sec:thm}
In Section~\ref{sec:thm1}, we introduce three theorems that define the fundamental limitations to no-jerk gearshift for a dual-clutch transmission. Table~\ref{tbl:SummaryCap} summarizes the association between the theorems and their pertaining combinations of gearshift scenarios and transmission types. We do not provide theorems for the gearshift scenario 3, as the conclusions follow naturally from Figure~\ref{fig:T03} and the associated discussion in Section~\ref{sec:extraj}. Then in Section~\ref{sec:thm2}, we adapt the theorems for transmissions based on planetary gearsets.

\begin{table}[h]
	\centering
	\caption{Summary of fundamental no-jerk gearshift limitations for the dual-clutch transmission}\label{tbl:SummaryCap}
	\begin{tabular}{l | c c}
		Scenario         & One-Way Clutch                   & Dual Friction Clutch             \\ \hline
		1: Upshift, driving motor  & Limited by Theorem \ref{Thm:OWC} & Limited by Theorem \ref{Thm:DCT} \\
		2: Downshift, driving motor & Limited by Theorem \ref{Thm:Dwn} & Limited by Theorem \ref{Thm:Dwn} \\
		3: Downshift, braking motor & Impossible                       & Not limited                      \\ \hline
	\end{tabular}
\end{table}

\subsection{Theorems for fundamental limitations with a dual-clutch architecture} \label{sec:thm1}
Theorem~\ref{Thm:OWC} expresses a necessary and sufficient condition for the existence of a no-jerk upshift in the power-limited region of the motor, when clutch 1 is a one-way clutch. This theorem allows one to use Equation~\ref{Eq:Pttr} and the assumptions that $\tpout(t) = (\ar t + \vi)/ \rw$ and $\tppout(t) = \ar / \rw$ to predict if a specific driving condition allows for a no-jerk gearshift; it is not needed to simulate the gearshift.
\begin{theorem} \label{Thm:OWC}
	For a dual-clutch architecture where the first gear clutch is a one-way clutch, when the motor operates in a power-limited region, a no-jerk upshift of scenario 1 can be obtained if and only if $\Pm(\ttr) \leq \pmax$, where $\Pm (\ttr)$ is the required motor power at the end of the torque transfer phase, and $\pmax$ is the motor power limit; $\Pm(\ttr)$ can be evaluated as
	\begin{equation}
	\Pm(\ttr) = i_1 \tpout(\ttr) \left( \left( i_1 \IIm + i_2^{-1} \Iout \right) \tppout(\ttr) + \left( i_1 \cm + i_2^{-1} \co \right) \tpout(\ttr) + i_2^{-1} \To \right). \label{Eq:Pttr} 
	\end{equation}
\end{theorem}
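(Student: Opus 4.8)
The plan is to obtain Equation~\ref{Eq:Pttr} by evaluating the two equations of motion at the single instant $\ttr$ at which the one-way clutch releases, and then to establish the ``if and only if'' by arguing that the torque transfer phase of scenario~1 admits essentially one feasible trajectory whose motor power peaks exactly at $\ttr$. First I would exploit the structure of the torque phase: while the one-way clutch carries load, its kinematic constraint $\tpm \le i_1 \tpout$ is active with equality, so $\tpm = i_1 \tpout$ and hence $\tppm = i_1 \tppout$ throughout. By definition the torque phase ends when the reaction torque on the one-way clutch vanishes, i.e. $T_1(\ttr) = 0$ in Equation~\ref{Eq:torqueOWC}. Setting $T_1 = 0$ in Equation~\ref{Eq:dct2} and solving gives $T_2(\ttr) = i_2^{-1}(\Iout \tppout + \co \tpout + \To)$, with every quantity evaluated at $\ttr$ along the no-jerk output trajectory of Definition~\ref{def:no-jerk}. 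Substituting this $T_2$ together with $\tpm = i_1 \tpout$ and $\tppm = i_1 \tppout$ into Equation~\ref{Eq:dct1} yields $\Tm(\ttr) = (i_1 \IIm + i_2^{-1}\Iout)\tppout + (i_1 \cm + i_2^{-1}\co)\tpout + i_2^{-1}\To$. Since the required motor power is $\Pm = \Tm \tpm$, multiplying by $\tpm = i_1 \tpout$ reproduces Equation~\ref{Eq:Pttr} exactly.

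For sufficiency I would show the power demand over the torque phase is maximised at its endpoint, so that $\Pm(\ttr) \le \pmax$ guarantees no saturation throughout. Writing $S$ for the pinned effective input torque $\Iout \tppout + \co \tpout + \To$, Equation~\ref{Eq:dct2} forces $i_1 T_1 + i_2 T_2 = S$ along the no-jerk trajectory, while Equation~\ref{Eq:dct1} gives $\Tm = i_1 \IIm \tppout + i_1 \cm \tpout + (T_1 + T_2)$. Eliminating $T_1$ yields $T_1 + T_2 = S/i_1 + T_2(i_1 - i_2)/i_1$, which, since $i_1 > i_2 > 0$, rises monotonically as $T_2$ is increased from $0$ (clutch~2 unloaded) to $S/i_2$ (clutch~1 unloaded). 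Choosing such a monotone $T_2$, the term $T_1 + T_2$ and hence $\Tm$ increase, and because $\tpm = i_1 \tpout$ is positive and increasing in the driving, accelerating regime, the product $\Pm = \Tm \tpm$ increases and attains its maximum at $\ttr$. The subsequent inertia phase adds no binding constraint: with $T_1 = 0$ and $T_2$ held fixed, the motor is synchronised from $i_1 \tpout$ down to $i_2 \tpout$, and since this can be spread over an arbitrarily long interval its peak power can be made as small as desired while $\To$ stays constant. Thus $\Pm(\ttr) \le \pmax$ suffices.

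Necessity rests on the one-way clutch removing all trajectory freedom during the torque phase. The constraint forbids $\tpm > i_1 \tpout$, and whenever the clutch bears load it forces $\tpm = i_1 \tpout$. A no-jerk upshift must transfer the full input demand from the $i_1$ path to the $i_2$ path while holding $\To$ constant, so the torque transfer can only be completed by driving $T_2$ up to $S/i_2$, at which point Equation~\ref{Eq:dct2} forces $T_1 = 0$ while the clutch is still dictating $\tpm = i_1 \tpout$. The system therefore must pass through the state $(T_1 = 0,\ \tpm = i_1 \tpout,\ T_2 = S/i_2)$ with power $\Pm(\ttr)$, a value that depends only on this endpoint and not on the path taken. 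If $\Pm(\ttr) > \pmax$ the motor saturates at that instant, $\To$ can no longer be held constant, and jerk is unavoidable; contrapositively, a no-jerk upshift requires $\Pm(\ttr) \le \pmax$.

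The main obstacle I anticipate is making the necessity argument airtight by ruling out every alternative scheme, in particular releasing the one-way clutch early (so $T_1 = 0$ and $\tpm < i_1 \tpout$ before the transfer completes) or attempting the inertia phase first as in scenario~2. I would close this by observing that dropping $T_1$ to zero before $T_2$ has reached $S/i_2$ instantly leaves the output torque below $S$, violating the constant-$\To$ requirement of Definition~\ref{def:no-jerk}, because the one-way clutch cannot supply the reaction torque needed to hold $\tpm$ at $i_1 \tpout$ once it overruns; hence no admissible trajectory avoids the state carrying power $\Pm(\ttr)$. Verifying the monotonicity claim rigorously, including the mild time dependence entering $S$ through the $\co \tpout$ term, is a secondary technical point that follows from $i_1 > i_2$ and $S > 0$ in the driving, power-limited regime.
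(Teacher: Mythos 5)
Your proposal is correct and follows essentially the same route as the paper's proof: you derive Equation~\ref{Eq:Pttr} by evaluating the equations of motion at $\ttr$ with $T_1(\ttr)=0$ and $\tpm = i_1\tpout$, you establish necessity from the fact that the one-way clutch forces the unique torque-transfer trajectory through that endpoint state (with the same argument that early overrun would require an instantaneous jump in $T_2$ or a drop in $\To$), and you establish sufficiency by showing $\Pm$ is monotonically increasing over the torque phase because $i_1 > i_2$ and $T_1 \geq 0$, so it peaks at $\ttr$. The only cosmetic difference is that you eliminate $T_1$ and parameterize by $T_2$ where the paper eliminates $T_2$ and keeps $T_1$ in its Equation~\ref{Eq:Pmax}; the content is identical.
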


\begin{proof}
	(Necessity): The gearshift begins with a torque transfer phase that spans from $ t = 0$ to $t = \ttr$, which is defined as follows: the clutch torques are smoothly varied from $T_1(0) \neq 0$ and $T_2(0) = 0$ to $T_1(\ttr) = 0$ and $T_2(\ttr) \neq 0$, while $\tpm = i_1 \tpout$ and $\tppm = i_1 \tppout$ for all $t \in [0, \ttr]$, and $\To$, $\tpout$, and $\tppout$ follow the conditions for a no-jerk gearshift as per Definition~\ref{def:no-jerk}. The condition $\tpm = i_1 \tpout$ is necessary because the first gear clutch is a one-way clutch: Equation~\ref{Eq:torqueOWC} indicates that $ T_1 \neq 0 \rightarrow \tpm = i_1 \tpout$, so this imposes $\tpm = i_1 \tpout$ at least until $T_1 = 0$. The condition $\tppm = i_1 \tppout$ is necessary for a no-jerk gearshift: Equation~\ref{Eq:torqueOWC} indicates that if the clutch opens before $T_1 = 0$, which means we would have $\tppm < i_1 \tppout$, then the clutch torque immediately drops to 0, and Equation~\ref{Eq:dct2} indicates that for $\To$, $\tpout$, and $\tppout$ to follow the conditions for a no-jerk gearshift, then $T_2$ would have to immediately jump to a higher value, which violates the limit on clutch torque application rate. Moreover, the one-way clutch imposes the kinematic constraint that $\tpm \leq i_1 \tpout$, so if we have that $\tpm = i_1 \tpout$, we cannot have that $\tppm > i_1 \tppout$.  Thus, we showed that the torque transfer phase as defined above is necessary for a no-jerk gearshift, as any deviation from it either implies a vehicle jerk, or is physically impossible. We now find the required motor power at the end of the torque phase. With $T_1(\ttr) = 0$ in Equation~\ref{Eq:dct2}, we get that 
	\begin{equation}
	T_2(\ttr) = i_2^{-1} \left( \Iout \tppout (\ttr) + \co \tpout (\ttr) + \To \right),
	\end{equation}
	which we can substitute in Equation~\ref{Eq:dct1} to get the motor torque at $\ttr$:
	\begin{equation}
	\Tm(\ttr) = \IIm \tppm (\ttr) + \cm \tpm (\ttr) + i_2^{-1} \left( \Iout \tppout (\ttr) + \co \tpout (\ttr) + \To \right). \label{Eq:tmttr}
	\end{equation}
	We find the required motor power at $\ttr$ using $\Pm(\ttr) = \tpm(\ttr) \Tm(\ttr)$, Equation~\ref{Eq:tmttr}, and the conditions that $\tpm(\ttr) = i_1 \tpout(\ttr)$ and $\tppm(\ttr) = i_1 \tppout(\ttr)$; we obtain Equation~\ref{Eq:Pttr}. Thus a no-jerk gearshift requires a motor power $\Pm(\ttr)$ as described in~(\ref{Eq:Pttr}). Naturally, this requires that the motor be capable of producing $\Pm(\ttr)$, therefore a no-jerk gearshift implies that $\Pm(\ttr) \leq \pmax$.
	\\

	(Sufficiency): Assumption~\ref{assum:limits} indicates that if a no-jerk gearshift cannot be obtained, it is because either the motor saturates, or the clutch saturates. Further assuming that $\ttr$ is large enough such that the clutch application rate does not saturate, if we cannot obtain a no-jerk gearshift, it is because the motor saturates. By showing that $\Pm(\ttr)$ is the maximal required motor power during the gearshift, we can show that if the motor saturates, we must have that $\Pm(\ttr) > \pmax$. To do this, let us rearrange Equations~\ref{Eq:dct1}~and~\ref{Eq:dct2} to eliminate $T_2$ and isolate $\Tm$ to get the motor power as follows,
	\begin{equation}
	\Pm = \tpm \left( \IIm \tppm + \cm \tpm + T_1 \left( 1-\frac{i_1}{i_2} \right) + i_2^{-1} \left( \Iout \tppout + \co \tpout + \To \right) \right). \label{Eq:Pmax}
	\end{equation}
	Assuming that the motor begins to synchronize with the second gear speed the instant the torque transfer is complete at $\ttr$, we have that the maximal $\tpm(t)$ is at $t = \ttr$. Furthermore, every term on the right-hand side of Equation~\ref{Eq:Pmax} is maximized at $\ttr$. In particular, we notice that since $i_1>i_2$, $T_1(\ttr) = 0$ maximizes $\Pm$, as $T_1 \geq 0$. Therefore, if we cannot obtain a no-jerk gearshift, then $\Pm(\ttr)>\pmax$.
\end{proof}

In Theorem~\ref{Thm:DCT}, we express a necessary and sufficient condition such that we can have a no-jerk upshift (gearshift scenario 1) in the power-limited region of the motor, when clutch 1 is a friction clutch. We follow the strategy detailed in Figure~\ref{fig:Trajectory}. To give the reader a visual appreciation of the limitations, we also simulated several instances of this strategy, which are displayed in Figure~\ref{fig:LIM}. In practice, one can use Theorem~\ref{Thm:DCT} to predict whether a specific driving condition allows a no-jerk gearshift. Because Equation~\ref{Eq:Th2Tm} makes it such that the equations of motion for the system become nonlinear, there may not be a convenient closed-form solution to facilitate the process of finding a sufficient $\DM$. Perhaps the best way to do so is to simulate the first phase of the gearshift up to a given $\DM$, and then validate if this $\DM$ is sufficient to allow a complete torque transfer before $\DS = 0$. If the given $\DM$ is not sufficient, then the process can be repeated for a higher $\DM$, until no higher $\DM$ can be reached, at which point we can conclude a no-jerk trajectory is infeasible.

\begin{theorem} \label{Thm:DCT}
	Consider a dual clutch architecture with two friction clutches as shown in Figure~\ref{fig:DCT}. Referring to Figure~\ref{fig:Trajectory}, suppose that at $t=0$ the motor speed is synchronized with gear 1's speed and that it is desired to initiate a gearshift to be completed at time $t_2$. Let $\ttr>0$ be the set torque transfer duration in the gearshift and $t_1$ the time at which the torque transfer is initiated. When the motor operates in a power-limited region, a no-jerk upshift of scenario 1 can be obtained if and only if
	\begin{align}
	\exists \ &\DM:= \tpm(t_1) - i_1 \tpout(t_1) \geq 0 \label{Eq:Th2E1},\\
	\mathrm{such\ that} \ &\DS:= \tpm(t_1 + \ttr) - i_1 \tpout(t_1 + \ttr) \geq 0,  \\
	&\tpm(t_1) \leq \tpmax, \\
	\mathrm{where} \ &\Tm(t) = \pmax / \tpm(t), \quad 0 \leq t \leq t_1+\ttr, \label{Eq:Th2Tm} \\
	&T_1(t) = 0, \quad t_1 + \ttr \leq t \leq t_2,  \label{Eq:Th2T1} \\
	&T_2(t) = 0, \quad 0 \leq t \leq t_1. \label{Eq:Th2T2}
	\end{align}
\end{theorem}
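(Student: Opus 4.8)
The plan is to follow the two-part structure used in the proof of Theorem~\ref{Thm:OWC}, handling necessity and sufficiency separately, and to organize the whole argument around the slip variable $\delta(t) := \tpm(t) - i_1 \tpout(t)$, whose values at the start and end of the torque transfer are precisely $\DM$ and $\DS$. The one algebraic fact I would establish first is that holding $\To$ constant (the no-jerk condition of Definition~\ref{def:no-jerk}) forces the combination $i_1 T_1 + i_2 T_2$ to equal the output-side quantity $\Iout \tppout + \co \tpout + \To$ that is fully prescribed by the no-jerk conditions, via Equation~\ref{Eq:dct2}. Eliminating $T_2$ with this relation and substituting $\Tm = \pmax/\tpm$ into Equation~\ref{Eq:dct1} then yields a scalar nonlinear ODE for $\delta(t)$ whose only remaining control freedom, during the transfer, is the shape of $T_1(t)$.

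For necessity I would first show that the three-phase structure is forced rather than assumed. Before the transfer the transmission is in gear~1, so $T_2 \equiv 0$ on $[0,t_1]$, which is Equation~\ref{Eq:Th2T2}; after the transfer it is in gear~2, so $T_1 \equiv 0$ on $[t_1+\ttr,t_2]$, which is Equation~\ref{Eq:Th2T1}. The crucial step is the sign argument: by the Coulomb model of Equation~\ref{Eq:T1}, while clutch~1 slips its torque is $\Fnone \mud \Ra n \sign(\tpm - i_1\tpout)$, so it can remain positive only while $\delta > 0$. If $\delta$ reached zero and changed sign while $T_1 \neq 0$, then $T_1$ would reverse discontinuously, and keeping $\To$ constant in Equation~\ref{Eq:dct2} would require $T_2$ to jump, violating the clutch rate limit $\dTdt$ of Assumption~\ref{assum:limits} and producing jerk. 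Hence any no-jerk trajectory must keep $\delta \geq 0$ until $T_1=0$, which yields both $\DM \geq 0$ and $\DS \geq 0$; the motor speed limit supplies $\tpm(t_1) \leq \tpmax$.

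The heart of the necessity argument, and what I expect to be the main obstacle, is reducing an arbitrary admissible strategy to the maximal-power law of Equation~\ref{Eq:Th2Tm}. Any feasible motor torque satisfies $\Tm \leq \pmax/\tpm$, and the right-hand side of the $\delta$-ODE is increasing in $\Tm$ (its partial derivative is $1/\IIm>0$). A differential-inequality comparison then shows that, for the \emph{same} chosen $T_1(\cdot)$ and the same initial margin, full power maximizes $\dot\delta$ pointwise, hence $\delta(t)$, and therefore $\DS$; so if any admissible strategy achieves a no-jerk shift, so does the full-power strategy for some $\DM\geq 0$. Two technical points I would have to nail down here rather than merely assert: first, because the feedback $\Tm=\pmax/\tpm$ renders the comparison ODE nonlinear, I would invoke a standard comparison theorem instead of solving it; and second, I must ensure $\delta$ stays nonnegative over the whole transfer interval, not just at its endpoints, which I would obtain by arguing that $\delta$ decreases monotonically once $T_1$ begins to fall, so that $\min_{[t_1,t_1+\ttr]}\delta = \DS$. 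A related timing subtlety is that building the margin at full power reaches any target $\DM$ at the earliest time and hence the smallest $\tpm(t_1)=i_1\tpout(t_1)+\DM$, so the bound $\tpm(t_1)\le\tpmax$ is easiest to meet, and starting the transfer earlier only lowers the absolute speeds and raises the motor's torque headroom, which can only help keep $\delta\ge 0$.

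For sufficiency I would run the construction forward: given a $\DM\geq 0$ meeting the stated conditions, apply $T_2=0$ with $\Tm=\pmax/\tpm$ on $[0,t_1]$ to reach slip $\DM$ while $\tpm\le\tpmax$; then on $[t_1,t_1+\ttr]$ drive $T_1$ smoothly to zero, set $T_2=(\Iout\tppout+\co\tpout+\To-i_1 T_1)/i_2$ to hold $\To$ fixed, and keep $\Tm=\pmax/\tpm$. By hypothesis $\DS\ge 0$, and by the monotonicity just noted $\delta\ge 0$ throughout, so $T_1$ never reverses sign and, with $\ttr$ taken large enough, the rate limit is respected, so no jerk occurs up to $t_1+\ttr$. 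On $[t_1+\ttr,t_2]$ we have $T_1=0$ and the remaining inertia phase merely synchronizes the motor down to gear-2 speed along an arbitrary trajectory, which, as in Theorem~\ref{Thm:OWC}, is not power-critical. This delivers a complete no-jerk upshift and establishes the converse. I would close by noting that, unlike Theorem~\ref{Thm:OWC}, the nonlinearity introduced by Equation~\ref{Eq:Th2Tm} precludes a closed-form power inequality, which is exactly why the condition is phrased as the existence of an admissible $\DM$.
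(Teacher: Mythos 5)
Your proposal is correct and follows essentially the same route as the paper: the same three-phase decomposition, the same torque-reversal argument via Equation~\ref{Eq:T1} forcing $\tpm \geq i_1\tpout$ while $T_1 \neq 0$, and the same reduction of arbitrary admissible strategies to the full-power law of Equation~\ref{Eq:Th2Tm}. Where you differ is only in rigor: the paper simply asserts that ``any deviation would only make it harder to achieve a sufficiently high $\DM$,'' whereas you make this precise with a differential-inequality comparison on the slip variable $\delta$ and explicitly flag the need to control $\min_{[t_1,t_1+\ttr]}\delta$ rather than just its endpoint values --- both genuine improvements on the published argument, not departures from it.
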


\begin{proof}
	(Necessity): The gearshift begins with a speed phase ($0 \leq t \leq t_1$), where the motor speed is increased above $i_1 \tpout$, with $T_2 = 0$ and $\Tm = \pmax /  \tpm$. Then the gearshift continues with a torque transfer phase ($t_1 \leq t \leq t_1 + \ttr$), where the clutch torques are smoothly varied from $T_1(t_1) \neq 0$ and $T_2(t_1) = 0$ to $T_1(t_1 + \ttr) = 0$ and $T_2(t_1 + \ttr) \neq 0$, while $\Tm = \pmax / \tpm$. Finally, the gearshift ends with an inertia phase ($t_1 + \ttr \leq t \leq t_2$), where the motor speed is brought down to $i_2 \tpout$, while $T_1 = 0$. During all three phases, the conditions for a no-jerk gearshift in Definition~\ref{def:no-jerk} can be maintained by modulating $T_1$ and $T_2$ as per Equation~\ref{Eq:dct2}. Motor saturation can be avoided as the motor torque is set to $\Tm = \pmax / \tpm$ for the first two phases, and the motor synchronization of the inertia phase requires that $\Tm < \pmax / \tpm$. The torque application rate on both clutches can be maintained within limits during the speed phase, as $T_2 = 0$ and the variations on $\To$, $\tpout$, and $\tppout$ are small enough such that $\dTonedt$ is within limits. The same argument can be made for the inertia phase, with $T_1 = 0$ this time. During the torque transfer phase, appropriate clutch torque trajectories must be chosen such that limits on $\dTdt$ are respected. Moreover, a torque reversal on clutch 1 must be avoided.
	
	In effect, the clutch torque $T_1$ is necessarily positive when the motor is driving the vehicle through the first gear and clutch 1 sticks. When clutch 1 slips, the direction of $T_1$ is dependent on the slip direction, as described in Equation~\ref{Eq:T1}. If $\tpm < i_1 \tpout$, then the torque $T_1$ suddenly reverses direction and becomes negative. In order to maintain the conditions for a no-jerk gearshift, Equation~\ref{Eq:dct2} indicates that the torque reversal on $T_1$ must be instantaneously compensated by an increase in $T_2$, which necessarily violates any application rate limitation. Consequently, we must have that $\tpm \geq i_1 \tpout$ as long as $T_1 \neq 0$. Once the torque transfer phase begins, we have $\tppm(t) < \tppm(t_1)$, as $i_1 > i_2$, which can be seen from Equations~\ref{Eq:dct1}~and~\ref{Eq:dct2}. Therefore, the torque transfer needs to start at a sufficiently high $\DM$ such that $\DS \geq 0$. Moreover, $\DM$ must correspond to a motor speed that is within its limit $\tpmax$. However, nothing guarantees the system can reach such a $\DM$. If it fails to reach a satisfactory $\DM$, then one of the conditions for a no-jerk gearshift will not be satisfied due to system limitations -- there will be jerk. This proves the necessity part by contraposition.
	\\
		
	(Sufficiency): By construction, the actuation strategy described in Equations~\ref{Eq:Th2Tm}~to~\ref{Eq:Th2T2} is sufficient for a no-jerk gearshift, given that Assumption~\ref{assum:limits} holds.
	
	This actuation strategy is not unique, but any deviation would only make it harder to achieve a sufficiently high $\DM$. Consequently, if a no-jerk gearshift can be obtained using such a deviation from Equations~\ref{Eq:Th2Tm}~to~\ref{Eq:Th2T2}, it can also be obtained using the actuation strategy described in these equations. Therefore, if a no-jerk gearshift cannot be obtained, there is no $\DM$ such that $\DS \geq 0$ when following Equations~\ref{Eq:Th2Tm}~to~\ref{Eq:Th2T2}.
\end{proof}

\begin{figure}
	\centering
	\includegraphics[width=0.4\linewidth,trim={2.6cm 21.8cm 11cm 2.2cm},clip]{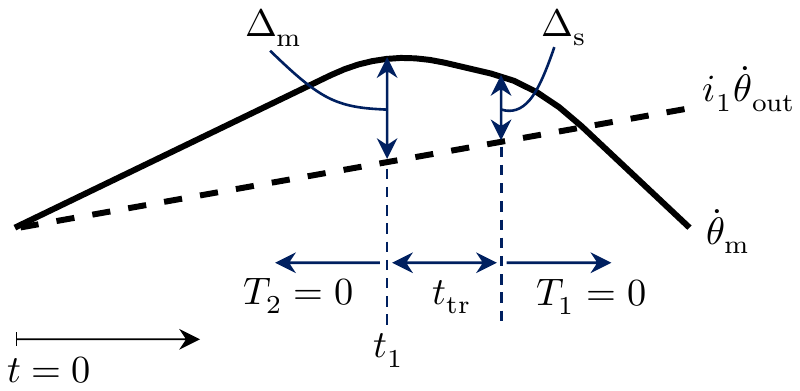}
	\caption{Strategy for a power-on upshift with a dual-friction clutch transmission. We first increase the motor speed to an increment $\DM$ above the gear 1 speed. Once $\DM$ is reached, we begin the torque transfer, which takes a set time $\ttr$. At the end of the torque transfer, the motor will have decelerated to a different speed whose increment over gear 1 speed we define as $\DS$. In order to avoid torque reversal at clutch 1, which would imply an unavoidable jerk, we must have that $\DS \geq 0$.}
	\label{fig:Trajectory}
\end{figure}

\begin{figure}
	\begin{subfigure}[t]{.5\textwidth}
		\centering
		\includegraphics[width=0.8\linewidth,trim={0cm 0cm 1cm 0.2cm},clip]{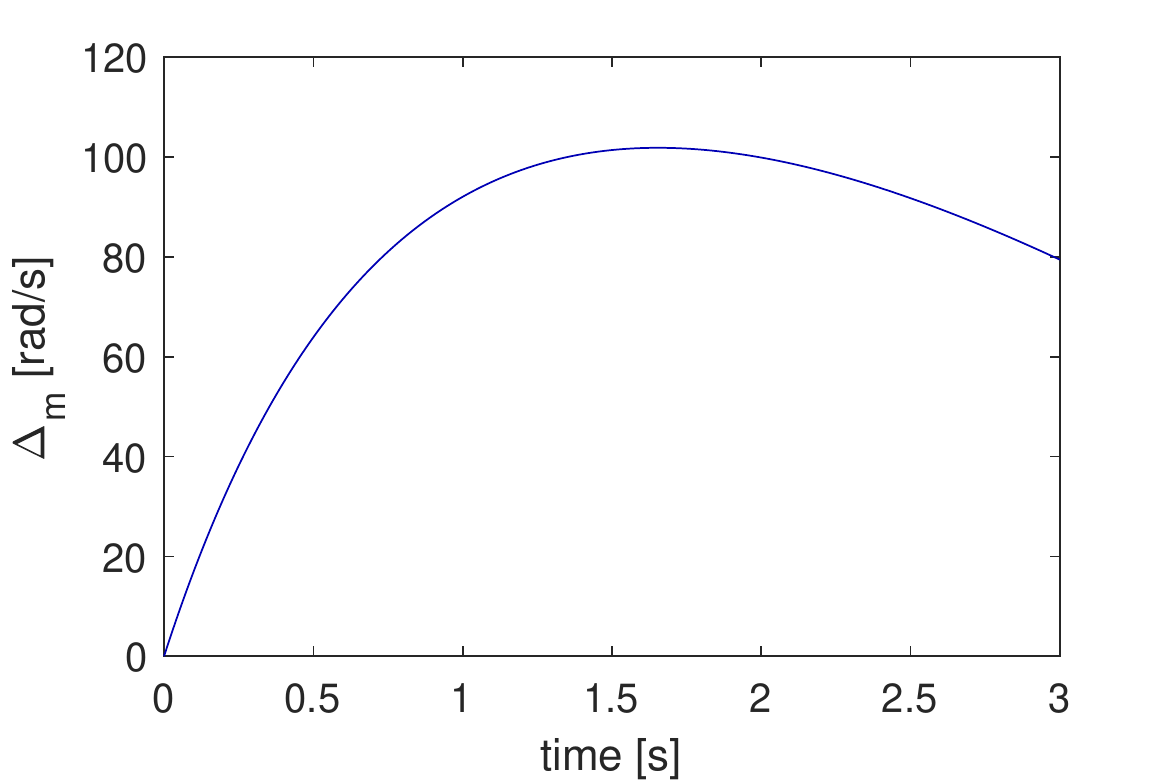}
		\caption{Even when using the maximum available motor power, not all $\DM$ can be reached.}
		\label{fig:LIM1}
	\end{subfigure}
	\begin{subfigure}[t]{.5\textwidth}
		\centering
		\includegraphics[width=0.8\linewidth,trim={0cm 0cm 1cm 0.4cm},clip]{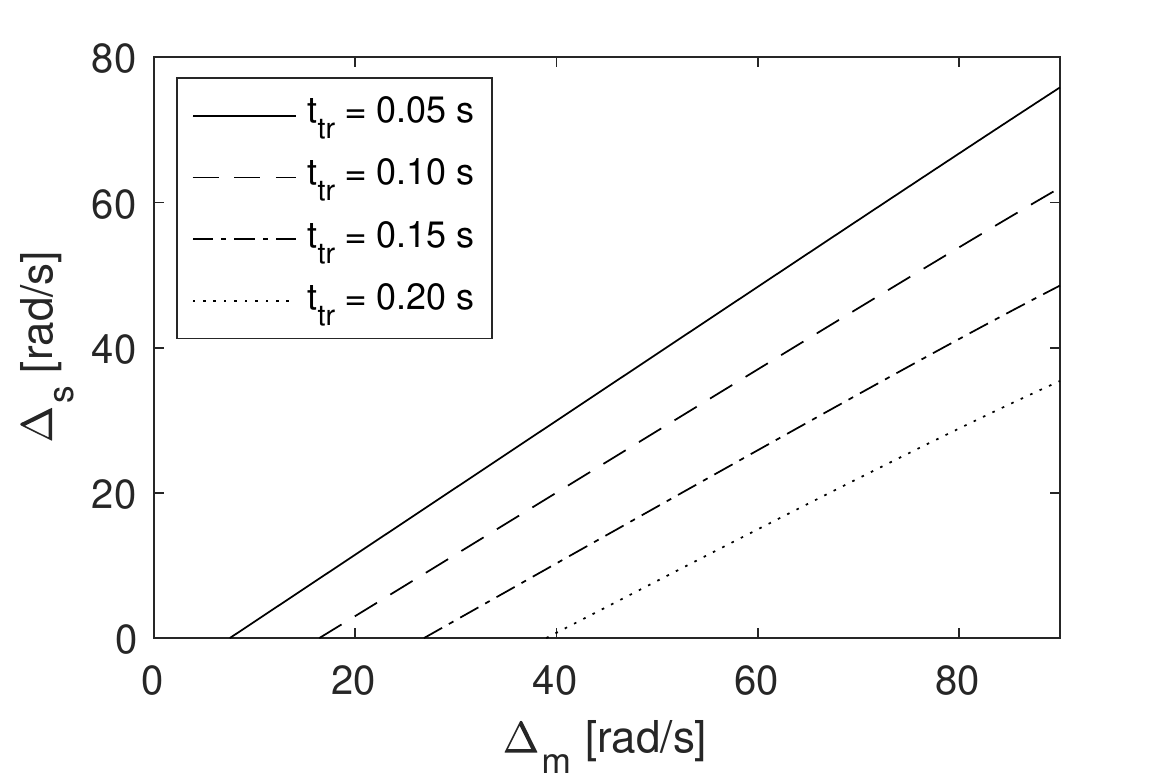}
		\caption{The resulting $\DS$ after the torque transfer is a function of the starting $\DM$ and the transfer time $\ttr$. }
		\label{fig:LIM2}
	\end{subfigure}
	\caption{Example of limitations for a power-on upshift with the example vehicle and a dual-friction-clutch transmission, when the gearshift is initiated at $\vi$ = 65\,km/h, and $\ar$ = 1.0\,m/s\textsuperscript{2}}
	\label{fig:LIM}
\end{figure}

Comparing Theorems~\ref{Thm:OWC}~and~\ref{Thm:DCT}, we conclude that transmissions with a one-way clutch are more prone to motor saturation. In effect, the constraint on the maximal motor power imposed in Theorem~\ref{Thm:OWC} does not exits in Theorem~\ref{Thm:DCT}, and as a result, transmissions with two friction clutches have a wider set of possible no-jerk gearshift trajectories. For example, both gearshift trajectories of Figure~\ref{fig:T01} (one-way clutch) and Figure~\ref{fig:T02} (two friction clutches) are initiated under the same driving scenario, namely an upshift at 80\% DTD in the motor's power-limited region. The trajectory of Figure~\ref{fig:T01} results in motor saturation and Theorem~\ref{Thm:OWC} would indicate that it is inevitable, meanwhile the trajectory of Figure~\ref{fig:T02} completes without motor saturation and a no-jerk gearshift is obtained.
\\

In Theorem~\ref{Thm:Dwn}, we express a necessary and sufficient condition such that we can have a no-jerk downshift (gearshift scenario 2) in the torque limited region of the motor. This limitation is the same for both when the first clutch is a one-way clutch or a friction clutch. The limitation is illustrated in Figure~\ref{fig:LIMdownshift}. We simulated several instances of gearshift scenarios 2, where we varied the initial vehicle acceleration $\ar$. We recorded the required time for the motor to synchronize with the first gear speed, which we label $\ts$. When $\ar$ is too high, either $\ts$ is impractically long, or the motor never synchronizes with the first gear speed.
\begin{theorem} \label{Thm:Dwn}
	For a dual-clutch architecture with two friction clutches, when the motor operates in a torque-limited region, assuming a constant $T_2$ during the inertia phase, a no-jerk power-on downshift of scenario 2 can be obtained if and only if 
	\begin{equation}
	\exists \ \ts >0 \quad \mathrm{s.t.} \quad 0 = - i_1 \frac{\vi + \ar \ts}{\rw} + \exp \left(-\frac{\cm}{\IIm}\ts \right) i_2 \frac{\vi}{\rw} + \frac{\tmax - T_2}{\cm} \left[ 1- \exp \left(-\frac{\cm}{\IIm}\ts \right) \right]. \label{Eq:thm3}
	\end{equation}
\end{theorem}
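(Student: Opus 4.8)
The plan is to follow the necessity/sufficiency structure of the proofs of Theorems~\ref{Thm:OWC} and~\ref{Thm:DCT}, but the analytical heart is a single first-order linear ODE for the motor speed during the inertia phase. Because this is a downshift with $i_1 > i_2$, the motor must speed up, so I would run the inertia phase first: the motor is accelerated from its initial gear-2 speed $\tpm(0) = i_2 \vi/\rw$ up toward gear-1 speed, and only once synchronization is reached does the torque transfer begin. During the inertia phase I would set $\Tm = \tmax$ (the motor is torque-limited and uses all available torque to accelerate) and keep clutch 1 open, so $T_1 = 0$. Substituting $T_1 = 0$ into Equation~\ref{Eq:dct2} and imposing the no-jerk conditions of Definition~\ref{def:no-jerk} fixes the clutch-2 torque to $T_2 = i_2^{-1}(\Iout\tppout + \co\tpout + \To)$, whose only time-varying term $\co\tpout$ is small; the hypothesis of the theorem makes this exact by assuming $T_2$ constant.

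Substituting $\Tm = \tmax$, $T_1 = 0$, and the constant $T_2$ into the motor equation~\ref{Eq:dct1} yields the linear ODE $\IIm\tppm + \cm\tpm = \tmax - T_2$. Solving it with $\tpm(0) = i_2\vi/\rw$ gives $\tpm(t) = i_2(\vi/\rw)\,\e^{-\cm t/\IIm} + \frac{\tmax - T_2}{\cm}\big(1 - \e^{-\cm t/\IIm}\big)$. Imposing the synchronization requirement $\tpm(\ts) = i_1\tpout(\ts) = i_1(\vi + \ar\ts)/\rw$ and moving everything to one side rearranges directly into Equation~\ref{Eq:thm3}. This is the routine computational core; the existence of a root $\ts > 0$ is then the condition that the bounded, concave motor-speed curve overtakes the linearly rising gear-1 speed line.

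For necessity I would argue, as in the earlier theorems, that reaching $\tpm = i_1\tpout$ before engaging clutch 1 is required: engaging it while $\tpm < i_1\tpout$ produces a reaction torque that would have to be compensated instantaneously at the handover, violating the clutch rate limit and hence producing jerk. I would then show that the max-torque, clutch-1-open strategy encoded in~\ref{Eq:thm3} is the \emph{best case} for reaching synchronization, so that if it fails, every admissible strategy fails. Using $\Tm < \tmax$ only lowers $\tpm(t)$ pointwise by the comparison principle for the linear ODE. The more delicate point is clutch 1: keeping it open is optimal because if clutch 1 were engaged in slip (possible only for a friction clutch), then $T_1 < 0$, and eliminating $T_2$ through the output-torque constraint shows the net driving term on the motor picks up a factor $T_1(i_1/i_2 - 1) < 0$, which \emph{reduces} the motor acceleration since $i_1 > i_2$. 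Hence keeping $T_1 = 0$ maximizes $\tpm(t)$, and this is exactly why the one-way-clutch and friction-clutch cases share the same limitation.

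Sufficiency is then immediate: if a root $\ts$ of~\ref{Eq:thm3} exists, the inertia phase completes synchronization at $\ts$, after which the torque transfer is carried out exactly as in Theorem~\ref{Thm:DCT}, with no-jerk conditions maintained by modulating the clutch torques per Equation~\ref{Eq:dct2}. The main obstacle I anticipate is the necessity direction, specifically making the ``best case'' claim rigorous --- that opening clutch 1 and saturating motor torque jointly maximize the motor-speed trajectory. The clutch-1 comparison above is the crux, as it is both the step that could fail for a friction clutch and the reason the shared bound in~\ref{Eq:thm3} is correct for both architectures.
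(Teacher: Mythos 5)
Your proposal is correct and follows essentially the same route as the paper's proof: inertia phase first with $\Tm = \tmax$ and $T_1 = 0$, the resulting linear ODE for $\tpm$ solved with $\tpm(0) = i_2\vi/\rw$, the synchronization condition $\tpm(\ts) = i_1\tpout(\ts)$ yielding Equation~\ref{Eq:thm3}, and the optimality argument (lowering $\Tm$ or slipping clutch 1 with $T_1<0$ only reduces $\tppm$ via the $T_1(i_1/i_2-1)$ term) to establish that failure of the best-case strategy implies failure of all strategies. The only cosmetic difference is that you place the best-case comparison in the necessity direction, where it logically belongs, whereas the paper defers it to its sufficiency paragraph.
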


\begin{proof}
	(Necessity): The gearshift begins with an inertia phase $(0 \leq t \leq \ts)$, where $\tpm$ is accelerated from gear 2 synchronization speed to gear 1 synchronization speed, while $\Tm = \tmax$ and $T_1 = 0$. The gearshift ends with a torque phase $(\ts \leq t \leq \ts + \ttr)$, where the transmission torque is transferred from clutch 2 to clutch 1. During both phases, the conditions for a no-jerk gearshift in Definition~\ref{def:no-jerk} can be maintained by modulating $T_1$ and $T_2$ as per Equation~\ref{Eq:dct2}. Motor saturation can be avoided by restricting the motor power to $\Tm \leq \tmax$. Clutch torque application rate saturation can be avoided by using an appropriate torque transfer trajectory during the torque phase. However, when using these restrictions together, nothing guarantees that the motor will eventually synchronize with gear 1 speed at some time $\ts$. We now find an equation that describes the synchronization time $\ts$ when the no-jerk gearshift conditions are maintained. With $\Tm = \tmax$ and $T_1 = 0$ in Equation~\ref{Eq:dct1}, we get the motor acceleration
	\begin{equation}
	\tppm = \IIm^{-1} \left( - \cm \tpm + \tmax - T_2 \right). \label{Eq:thm3tppm}
	\end{equation} 
	Due to the reasonable assumption that $T_2$ is constant during the inertia phase, Equation~\ref{Eq:thm3tppm} is a linear ordinary differential equation. We solve Equation~\ref{Eq:thm3tppm} to obtain the evolution of $\tpm(t)$. We impose the initial condition that $\tpm(0) = i_2 \tpout(0)$, and we get
	\begin{equation}
	\tpm(t) = \exp \left(-\frac{\cm}{\IIm}t \right) i_2 \tpout(0) + \frac{\tmax - T_2}{\cm} \left[ 1- \exp \left(-\frac{\cm}{\IIm}t \right) \right]. \label{Eq:thm3tpm}
	\end{equation}
	When the motor synchronizes with the gear 1 speed, we have that $\tpm = i_1 \tpout$. Substituting this condition into Equation~\ref{Eq:thm3tpm}, and using the fact that $\tpout(t) = (\vi + \ar t)/ \rw$, we obtain Equation~\ref{Eq:thm3}. For the motor to synchronize with gear 1 speed, Equation~\ref{Eq:thm3} must have a solution. Therefore, if Equation~\ref{Eq:thm3} has no solution for $\ts$, then a no-jerk gearshift cannot be obtained, which proves the necessity part by contraposition.
	\\ 
	
	(Sufficiency): By construction, the actuation strategy described in the first paragraph of this proof is sufficient for a no-jerk gearshift, given that Assumption~\ref{assum:limits} holds. 
	
	This actuation strategy is not unique. We could have $\Tm < \tmax$, but it would only make it harder to synchronize the motor with gear 1 speed, which can be seen from Equation~\ref{Eq:thm3tppm}. If clutch 1 is a one-way clutch, then we must have $T_1 = 0$ until the motor synchronizes with gear 1 speed, as per Equation~\ref{Eq:torqueOWC}. If clutch 1 is a friction clutch, then it is possible to activate $T_1$ before $\ts$. But from Equation~\ref{Eq:T1}, we notice that we would have $T_1 < 0$, so Equation~\ref{Eq:dct2} dictates that $T_2$ increases by $|T_1|(i_1/i_2)$ to respect the no-jerk conditions, and since $i_1>i_2$, $\tppm$ would again be smaller than if $T_1 = 0$. Therefore, if a no-jerk gearshift can be completed with a different actuation strategy than the one presented in the necessity part of the proof, it can also be completed with this strategy. As a result, the existence of a solution to Equation~\ref{Eq:thm3} is a sufficient condition for the possibility of a no-jerk gearshift.
\end{proof}

\begin{figure}
	\centering
	\includegraphics[width=0.40\linewidth,trim={0cm 0cm 1cm 0.4cm},clip]{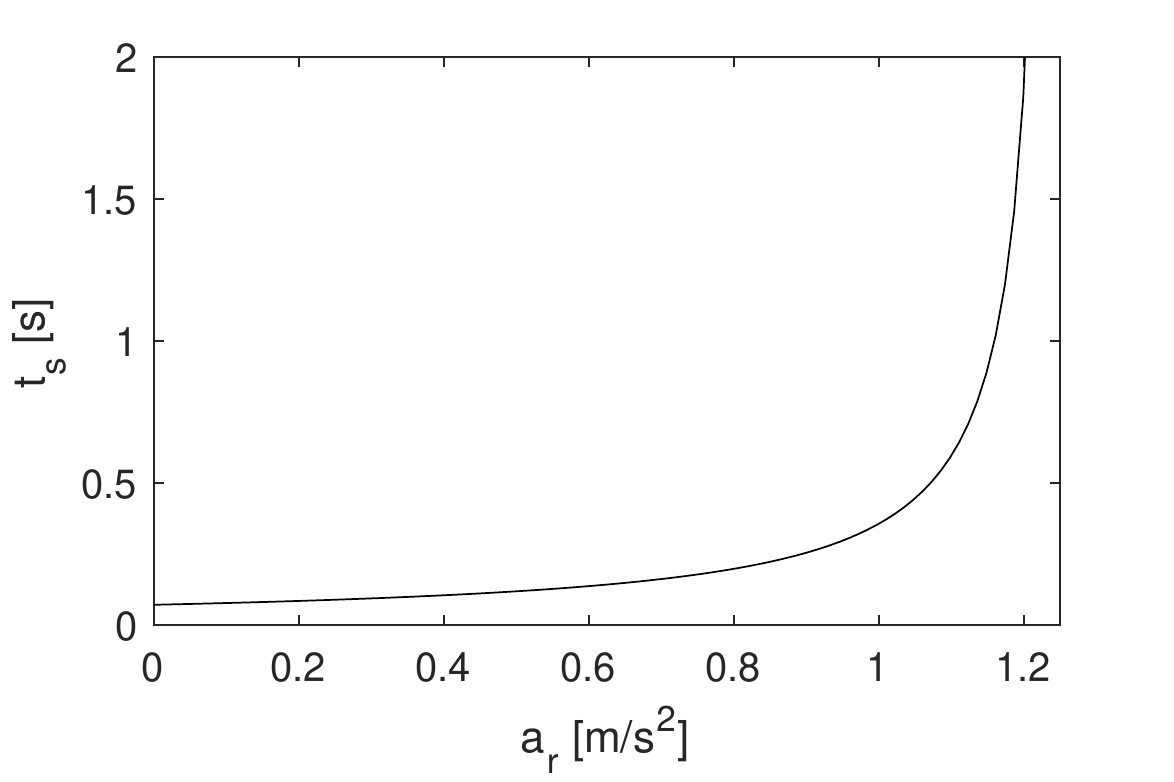}
	\caption{Example of limitations for a power-on downshift with the example vehicle, when the gearshift is initiated at $\vi$ = 18\,km/h, and $\ar$ is varied. When the desired vehicle acceleration $\ar$ is increased, the time $\ts$ required for the inertia phase to complete also increases.}
	\label{fig:LIMdownshift}
\end{figure}

\subsection{Theorem adaptations for planetary gearset architectures} \label{sec:thm2}
In this section, we adapt Theorems~\ref{Thm:OWC}~to~\ref{Thm:Dwn} to planetary gearset architectures described by the general Equations~\ref{Eq:DBTcom1}~and~\ref{Eq:DBTcom2}. 

\subsubsection{Theorem 1 adaptation}
With the new system equations, the motor power at the end of the torque transfer phase -- originally described by Equation~\ref{Eq:Pttr} -- now becomes
\begin{multline}
\Pm(\ttr) = i_1 \tpout (\ttr) \Bigg( \cm i_1 \tpout (\ttr) + \bigg( C_1 - \frac{C_4 C_5}{C_8} \bigg)^{-1} \bigg[ \bigg( i_1 \IIm - \frac{C_4}{C_8} \Iout \bigg) \tppout (\ttr) \\
+ \bigg( C_2 - \frac{C_4 C_6}{C_8} \bigg) \big( \co \tpout (\ttr) + \IF^{-1} \To \big) \bigg] \Bigg).
\end{multline}
The proof for the necessity condition remains valid, as it it based on the limitations imposed by the one-way clutch, and these limitations still apply. The proof for the sufficiency condition requires to demonstrate that $\Pm(\ttr)$ is the maximal motor power required during the gearshift, which now ultimately depends on the coefficients $C_1$ to $C_8$, as can be seen by adapting Equation~\ref{Eq:Pmax} into a new expression for the motor power:
\begin{multline}
\Pm = \tpm \Bigg( \cm \tpm + \bigg( C_1 - \frac{C_4 C_5}{C_8} \bigg)^{-1} \bigg[ \IIm \tppm - \frac{C_4}{C_8} \Iout \tppout  \\
+ \bigg( C_2 - \frac{C_4 C_6}{C_8} \bigg) \big( \co \tpout + \IF^{-1} \To \big) - \bigg( C_3 - \frac{C_4 C_7}{C_8} \bigg) T_1 \bigg] \Bigg).
\end{multline}
In particular, for $T_1 = 0$ to maximize $\Pm$, we must have $-(C_3 - C_4C_7/C_8)(C_1 - C_4C_5/C_8)^{-1}<0$. For the architecture in Figure~\ref{fig:DBT}, this expression reduces to $-(\beta_1+1)/\beta_1$, so that indeed $T_1 = 0$ maximizes $\Pm$ since $-(\beta_1+1)/\beta_1<0$. Assuming that the other variables -- i.e., $\tpm$, $\tppm$, $\tpout$, $\tppout$, and $\To$ -- remain approximately constant during the torque phase, we have that $\Pm(\ttr)$ is the maximal motor power required during the gearshift for the case of the architecture in Figure~\ref{fig:DBT}. 

\subsubsection{Theorem 2 adaptation} \label{sec:adaptTh2}
The existence of a sufficient $\DM$ such that $\DS \geq 0$ remains a necessary and sufficient condition for the possibility of a no-jerk gearshift. The only difference is that we cannot have $\Tm$ jumping to $\pmax / \tpm$ at $t=0$, as prescribed in Equation~\ref{Eq:Th2Tm}. In effect, $\Tm$ now appears in the second equation of motion of the system, i.e., Equation~\ref{Eq:DBTcom2}, so to maintain $\To$, $\tpm$, and $\tppm$ such that the no-jerk conditions in Definition~\ref{def:no-jerk} are met, a sudden increase in $\Tm$ implies a sudden increase in $T_1$ (as we impose $T_2(0)=0$), which necessarily violates any torque application rate limitation. The effect of an increase in $\Tm$ on $T_1$ can be seen in Figure~\ref{fig:T05}. So Theorem~\ref{Thm:DCT} remains valid, but we must ramp up $\Tm$ such that $\dTdt$ is within limits. 

\subsubsection{Theorem 3 adaptation}
The proof for Theorem~\ref{Thm:Dwn} remains valid, but the necessary and sufficient condition described by Equation~\ref{Eq:thm3} must be adapted to the new system equations. First, we get $\tpm$ during the inertia phase by imposing $T_1 = 0$ in Equations~\ref{Eq:DBTcom1}~and~\ref{Eq:DBTcom2}; we get
\begin{align}
\tppm &=  \IIm^{-1} \left( -\gamma \cm \tpm + \gamma \Tm - \tau \right),     \\
\mathrm{where} \quad \gamma &= \left( C_1 -\frac{C_4 C_5}{C_8} \right), \\
\tau &= \left(C_2 - \frac{C_4 C_6}{C_8} \right) \left( \co \tpout + \IF^{-1} \To \right) + \frac{C_4}{C_8} \Iout \tppout.
\end{align}
Following the argument in Section~\ref{sec:adaptTh2}, we cannot have $\Tm$ jump to $\tmax$ at $t=0$, as this would imply a jump in $T_2$, which would violate any $\dTtwodt$ limitation. Therefore, $\Tm$ should be increased gradually from $\Tm(0)$ to $\tmax$ at the beginning of the gearshift. For this proof adaptation however, we neglect the effect of gradually ramping $\Tm$ on the synchronization time $\ts$ -- we assume the ramp is completed very quickly, so $\Tm \approx \tmax$. Further, we assume that $\tpout(t) = \tpout(0)$ and $\To(t) = \To(0)$ for the duration of the inertia phase, which we also assumed in Theorem~\ref{Thm:Dwn} by imposing $T_2$ constant for the duration of the inertia phase. We now have a linear ordinary differential equation of the same form as in Theorem~\ref{Thm:Dwn}, which we also solve imposing $\tpm(0) = i_2 \tpout(0)$. We obtain an adapted condition for the possibility of a no-jerk gearshift as follows:
\begin{equation}
\exists \ \ts >0 \quad \mathrm{s.t.} \quad 0 = - i_1 \frac{\vi + \ar \ts}{\rw} + \exp \left(-\frac{\gamma \cm}{\IIm}\ts \right) i_2 \frac{\vi}{\rw} + \frac{\gamma \tmax - \tau}{\gamma \cm} \left[ 1- \exp \left(-\frac{\gamma \cm}{\IIm}\ts \right) \right]. \label{Eq:thm3adapt}
\end{equation}

\subsection{Theorem adaptation for other vehicle models and kinematic conditions} \label{sec:thmadaptkin}
The theorems presented in this article are based on the general driveline model described in Section~\ref{sec:model}, which does not include tire slip and other nonlinear behaviors likely to occur on real-world vehicles. Should one desire to adapt the theorems to other vehicle models, they would need to solve for the required transmission output torque $\To(t)$ and velocity $\tpout(t)$ using the new equations for the vehicle model and the no-jerk constraints they wish to impose. Effectively, this would result in different conditions for a no-jerk gearshift than what we present in Definition~\ref{def:no-jerk}. With these new conditions for $\To(t)$ and $\tpout(t)$, the theorems can be adapted accordingly, perhaps with additional mathematical complexity as more terms may become time-dependent.
\\

Similarly, one could adapt the theorems to other kinematic conditions than the no-jerk condition we imposed $(\tpppv = 0)$, such as small and constant vehicle jerk level for instance. This also requires solving for new conditions on $\To(t)$ and $\tpout(t)$, and adapting the theorems accordingly. The same remark holds: this may increase the mathematical complexity as more terms may become time-dependent.

\section{Conclusion}
In this article, we presented theorems that describe the fundamental limitations to no-jerk gearshift in the presence of motor and clutch saturation. We showed that transmissions with a one-way clutch have stronger limitations than their friction clutch counterparts. This means that a one-way clutch transmission will fail to provide an uninterrupted gearshift under a wider set of driving conditions. We also showed that transmissions with a planetary gearset architecture have different dynamics than transmissions with parallel shaft architectures, which requires slight adaptations of the theorems for fundamental limitations of no-jerk gearshifts. 
\\

This work has important implications for automotive engineers. The theorems are tools to quickly evaluate if a no-jerk gearshift is possible given a vehicle description, driving scenario, and transmission architecture. This can be used to motivate the choice of a transmission type over another during the conceptual design phase of a new vehicle. For example, if a vehicle is expected to regularly perform upshifts close to saturation in the power-limited region of its motor map, and Theorem~\ref{Thm:OWC} indicates that a gearshift jerk would often be inevitable, then the vehicle designers should consider using a dual-friction-clutch architecture instead of a one-way clutch.  Also, the theorems can be integrated in a transmission control unit: when a gearshift is desired, the unit quickly evaluates if a no-jerk gearshift is possible, and then decides if the driving torque should be smoothly reduced prior to initiating the gearshift, or if the gearshift can be initiated with the current DTD without substantial risks of saturating the motor and obtaining a large driveline jerk. This work also has important implications for academic researchers. The theorems present conditions where gearshift jerk is unavoidable, and any attempt at eliminating jerk with a new controller design would be futile. Moreover, this work helps to identify when two transmission architectures are mathematically equivalent, and therefore will result in the same fundamental limitations.

\section*{Funding sources}
This work was supported by Mitacs and Quebec's Fonds de recherche Nature et technologies.

\bibliographystyle{IEEEtranM}
\bibliography{McGill_PhD}

\begin{thebibliography}{10}
\providecommand{\url}[1]{#1}
\csname url@samestyle\endcsname
\providecommand{\newblock}{\relax}
\providecommand{\bibinfo}[2]{#2}
\providecommand{\BIBentrySTDinterwordspacing}{\spaceskip=0pt\relax}
\providecommand{\BIBentryALTinterwordstretchfactor}{4}
\providecommand{\BIBentryALTinterwordspacing}{\spaceskip=\fontdimen2\font plus
\BIBentryALTinterwordstretchfactor\fontdimen3\font minus
  \fontdimen4\font\relax}
\providecommand{\BIBforeignlanguage}[2]{{%
\expandafter\ifx\csname l@#1\endcsname\relax
\typeout{** WARNING: IEEEtran.bst: No hyphenation pattern has been}%
\typeout{** loaded for the language `#1'. Using the pattern for}%
\typeout{** the default language instead.}%
\else
\language=\csname l@#1\endcsname
\fi
#2}}
\providecommand{\BIBdecl}{\relax}
\BIBdecl

\bibitem{ulrich_product_2012}
K.~T. Ulrich and S.~D. Eppinger, \emph{Product design and development},
  5th~ed.\hskip 1em plus 0.5em minus 0.4em\relax New York: McGraw-Hill/Irwin,
  2012.

\bibitem{pahl_engineering_2007}
G.~Pahl, W.~Beitz, J.~Feldhusen, and K.-H. Grote,
  \emph{\BIBforeignlanguage{en}{Engineering {Design}}}.\hskip 1em plus 0.5em
  minus 0.4em\relax London: Springer London, 2007.

\bibitem{french_conceptual_1999}
M.~J. French, \emph{\BIBforeignlanguage{en}{Conceptual {Design} for
  {Engineers}}}.\hskip 1em plus 0.5em minus 0.4em\relax London: Springer
  London, 1999.

\bibitem{wu_powertrain_2015}
G.~Wu, X.~Zhang, and Z.~Dong, ``\BIBforeignlanguage{en}{Powertrain
  architectures of electrified vehicles: {Review}, classification and
  comparison},'' \emph{\BIBforeignlanguage{en}{Journal of the Franklin
  Institute}}, vol. 352, no.~2, pp. 425--448, Feb. 2015.

\bibitem{yang_state---art_2016}
Y.~Yang, K.~A. Ali, J.~Roeleveld, and A.~Emadi,
  ``\BIBforeignlanguage{en}{State-of-the-art electrified powertrains - hybrid,
  plug-in, and electric vehicles},''
  \emph{\BIBforeignlanguage{en}{International Journal of Powertrains}}, vol.~5,
  no.~1, p.~1, 2016.

\bibitem{resele_advanced_1995}
P.~E. Resele and O.~Bitsche, ``Advanced {Fully} {Automatic} {Two}-{Speed}
  {Transmission} for {Electric} {Automobiles},'' in \emph{{SAE} {Technical}
  {Paper}}, Aug. 1995.

\bibitem{lei_control_2019}
Y.~Lei, J.~Hu, Y.~Fu, S.~Sun, X.~Li, W.~Chen, L.~Hou, and Y.~Zhang,
  ``\BIBforeignlanguage{en}{Control strategy of automated manual transmission
  based on active synchronisation of driving motor in electric bus},''
  \emph{\BIBforeignlanguage{en}{Advances in Mechanical Engineering}}, vol.~11,
  no.~4, Apr. 2019.

\bibitem{alizadeh_robust_2014}
H.~V. Alizadeh and B.~Boulet, ``Robust control of synchromesh friction in an
  electric vehicle's clutchless automated manual transmission,'' in \emph{2014
  {IEEE} {Conference} on {Control} {Applications} ({CCA})}, Oct. 2014, pp.
  611--616.

\bibitem{tseng_advanced_2015}
C.-Y. Tseng and C.-H. Yu, ``\BIBforeignlanguage{en}{Advanced shifting control
  of synchronizer mechanisms for clutchless automatic manual transmission in an
  electric vehicle},'' \emph{\BIBforeignlanguage{en}{Mechanism and Machine
  Theory}}, vol.~84, pp. 37--56, Feb. 2015.

\bibitem{mo_shift_2021}
W.~Mo, J.~Wu, P.~D. Walker, and N.~Zhang, ``\BIBforeignlanguage{en}{Shift
  characteristics of a bilateral {Harpoon}-shift synchronizer for electric
  vehicles equipped with clutchless {AMTs}},''
  \emph{\BIBforeignlanguage{en}{Mechanical Systems and Signal Processing}},
  vol. 148, p. 107166, Feb. 2021.

\bibitem{mo_dynamic_2021}
W.~Mo, P.~D. Walker, Y.~Tian, and N.~Zhang, ``\BIBforeignlanguage{en}{Dynamic
  analysis of unilateral harpoon-shift synchronizer for electric vehicles},''
  \emph{\BIBforeignlanguage{en}{Mechanism and Machine Theory}}, vol. 157, p.
  104173, Mar. 2021.

\bibitem{amisano_automated_2014}
F.~Amisano, E.~Galvagno, M.~Velardocchia, and A.~Vigliani,
  ``\BIBforeignlanguage{en}{Automated manual transmission with a torque gap
  filler {Part} 1: kinematic analysis and dynamic analysis},''
  \emph{\BIBforeignlanguage{en}{Proceedings of the Institution of Mechanical
  Engineers, Part D: Journal of Automobile Engineering}}, vol. 228, no.~11, pp.
  1247--1261, Sep. 2014.

\bibitem{galvagno_analysis_2011}
E.~Galvagno, M.~Velardocchia, and A.~Vigliani,
  ``\BIBforeignlanguage{en}{Analysis and simulation of a torque assist
  automated manual transmission},'' \emph{\BIBforeignlanguage{en}{Mechanical
  Systems and Signal Processing}}, vol.~25, no.~6, pp. 1877--1886, Aug. 2011.

\bibitem{sorniotti_analysis_2012}
A.~Sorniotti, T.~Holdstock, G.~L. Pilone, F.~Viotto, S.~Bertolotto, M.~Everitt,
  R.~J. Barnes, B.~Stubbs, and M.~Westby, ``\BIBforeignlanguage{en}{Analysis
  and simulation of the gearshift methodology for a novel two-speed
  transmission system for electric powertrains with a central motor},''
  \emph{\BIBforeignlanguage{en}{Proceedings of the Institution of Mechanical
  Engineers, Part D: Journal of Automobile Engineering}}, vol. 226, no.~7, pp.
  915--929, Jul. 2012.

\bibitem{hong_shift_2016}
S.~Hong, H.~Son, S.~Lee, J.~Park, K.~Kim, and H.~Kim,
  ``\BIBforeignlanguage{en}{Shift control of a dry-type two-speed dual-clutch
  transmission for an electric vehicle},''
  \emph{\BIBforeignlanguage{en}{Proceedings of the Institution of Mechanical
  Engineers, Part D: Journal of Automobile Engineering}}, vol. 230, no.~3, pp.
  308--321, Feb. 2016.

\bibitem{gao_gear_2015}
B.~Gao, Q.~Liang, Y.~Xiang, L.~Guo, and H.~Chen, ``\BIBforeignlanguage{en}{Gear
  ratio optimization and shift control of 2-speed {I}-{AMT} in electric
  vehicle},'' \emph{\BIBforeignlanguage{en}{Mechanical Systems and Signal
  Processing}}, vol. 50-51, pp. 615--631, Jan. 2015.

\bibitem{mousavi_seamless_2015}
M.~S. Mousavi, A.~Pakniyat, T.~Wang, and B.~Boulet,
  ``\BIBforeignlanguage{en}{Seamless dual brake transmission for electric
  vehicles: {Design}, control and experiment},''
  \emph{\BIBforeignlanguage{en}{Mechanism and Machine Theory}}, vol.~94, pp.
  96--118, Dec. 2015.

\bibitem{tian_modelling_2018}
Y.~Tian, J.~Ruan, N.~Zhang, J.~Wu, and P.~Walker,
  ``\BIBforeignlanguage{en}{Modelling and control of a novel two-speed
  transmission for electric vehicles},''
  \emph{\BIBforeignlanguage{en}{Mechanism and Machine Theory}}, vol. 127, pp.
  13--32, Sep. 2018.

\bibitem{fang_design_2016}
S.~Fang, J.~Song, H.~Song, Y.~Tai, F.~Li, and T.~Sinh~Nguyen,
  ``\BIBforeignlanguage{en}{Design and control of a novel two-speed
  {Uninterrupted} {Mechanical} {Transmission} for electric vehicles},''
  \emph{\BIBforeignlanguage{en}{Mechanical Systems and Signal Processing}},
  vol.~75, pp. 473--493, Jun. 2016.

\bibitem{roozegar_design_2017}
M.~Roozegar, Y.~Setiawan, and J.~Angeles, ``\BIBforeignlanguage{en}{Design,
  modelling and estimation of a novel modular multi-speed transmission system
  for electric vehicles},'' \emph{\BIBforeignlanguage{en}{Mechatronics}},
  vol.~45, pp. 119--129, Aug. 2017.

\bibitem{ye_optimal_2017}
J.~Ye, X.~Huang, K.~Zhao, and Y.~Liu, ``\BIBforeignlanguage{en}{Optimal
  coordinating control for the overlapping shift of a seamless 2-speed
  transmission equipped in an electric vehicle},''
  \emph{\BIBforeignlanguage{en}{Proceedings of the Institution of Mechanical
  Engineers, Part I: Journal of Systems and Control Engineering}}, vol. 231,
  no.~10, pp. 797--811, Nov. 2017.

\bibitem{murata_innovation_2012}
S.~Murata, ``\BIBforeignlanguage{en}{Innovation by in-wheel-motor drive
  unit},'' \emph{\BIBforeignlanguage{en}{Vehicle System Dynamics}}, vol.~50,
  no.~6, pp. 807--830, Jun. 2012.

\bibitem{tang_control_2010}
Y.~Tang, ``Control system for an all-wheel drive electric vehicle,'' US Patent
  US7\,739\,005B1, Jun., 2010.

\bibitem{tang_dual_2013}
------, ``\BIBforeignlanguage{en}{Dual motor drive and control system for an
  electric vehicle},'' US Patent US8\,453\,770B2, Jun., 2013.

\bibitem{wu_robust_2018}
J.~Wu, J.~Liang, J.~Ruan, N.~Zhang, and P.~D. Walker,
  ``\BIBforeignlanguage{en}{A robust energy management strategy for {EVs} with
  dual input power-split transmission},''
  \emph{\BIBforeignlanguage{en}{Mechanical Systems and Signal Processing}},
  vol. 111, pp. 442--455, Oct. 2018.

\bibitem{wu_efficiency_2018}
------, ``\BIBforeignlanguage{en}{Efficiency comparison of electric vehicles
  powertrains with dual motor and single motor input},''
  \emph{\BIBforeignlanguage{en}{Mechanism and Machine Theory}}, vol. 128, pp.
  569--585, Oct. 2018.

\bibitem{hu_efficiency_2015}
M.~Hu, J.~Zeng, S.~Xu, C.~Fu, and D.~Qin, ``Efficiency {Study} of a
  {Dual}-{Motor} {Coupling} {EV} {Powertrain},'' \emph{IEEE Transactions on
  Vehicular Technology}, vol.~64, no.~6, pp. 2252--2260, Jun. 2015, conference
  Name: IEEE Transactions on Vehicular Technology.

\bibitem{wu_driving_2021}
J.~Wu and N.~Zhang, ``\BIBforeignlanguage{en}{Driving mode shift control for
  planetary gear based dual motor powertrain in electric vehicles},''
  \emph{\BIBforeignlanguage{en}{Mechanism and Machine Theory}}, vol. 158, p.
  104217, Apr. 2021.

\bibitem{miller_hybrid_2006}
J.~Miller, ``\BIBforeignlanguage{en}{Hybrid electric vehicle propulsion system
  architectures of the e-{CVT} type},'' \emph{\BIBforeignlanguage{en}{IEEE
  Transactions on Power Electronics}}, vol.~21, no.~3, pp. 756--767, May 2006.

\bibitem{sorniotti_novel_2013}
A.~Sorniotti, T.~Holdstock, M.~Everitt, M.~Fracchia, F.~Viotto, C.~Cavallino,
  and S.~Bertolotto, ``\BIBforeignlanguage{en}{A novel clutchless
  multiple-speed transmission for electric axles},''
  \emph{\BIBforeignlanguage{en}{International Journal of Powertrains}}, vol.~2,
  no. 2/3, p. 103, 2013.

\bibitem{liang_shifting_2018}
J.~Liang, H.~Yang, J.~Wu, N.~Zhang, and P.~D. Walker,
  ``\BIBforeignlanguage{en}{Shifting and power sharing control of a novel dual
  input clutchless transmission for electric vehicles},''
  \emph{\BIBforeignlanguage{en}{Mechanical Systems and Signal Processing}},
  vol. 104, pp. 725--743, May 2018.

\bibitem{nguyen_shifting_2020}
C.~T. Nguyen, P.~D. Walker, and N.~Zhang, ``\BIBforeignlanguage{en}{Shifting
  strategy and energy management of a two-motor drive powertrain for
  extended-range electric buses},'' \emph{\BIBforeignlanguage{en}{Mechanism and
  Machine Theory}}, vol. 153, p. 103966, Nov. 2020.

\bibitem{bottiglione_energy_2014}
F.~Bottiglione, S.~De~Pinto, G.~Mantriota, and A.~Sorniotti,
  ``\BIBforeignlanguage{en}{Energy {Consumption} of a {Battery} {Electric}
  {Vehicle} with {Infinitely} {Variable} {Transmission}},''
  \emph{\BIBforeignlanguage{en}{Energies}}, vol.~7, no.~12, pp. 8317--8337,
  Dec. 2014.

\bibitem{sorniotti_selection_2011}
A.~Sorniotti, S.~Subramanyan, A.~Turner, C.~Cavallino, F.~Viotto, and
  S.~Bertolotto, ``\BIBforeignlanguage{en}{Selection of the {Optimal} {Gearbox}
  {Layout} for an {Electric} {Vehicle}},'' \emph{\BIBforeignlanguage{en}{SAE
  International Journal of Engines}}, vol.~4, no.~1, pp. 1267--1280, Apr. 2011.

\bibitem{golkani_optimal_2017}
M.~Golkani, M.~Steinberger, M.~Bachinger, J.~Rumetshofer, M.~Stolz, and
  M.~Horn, ``\BIBforeignlanguage{en}{Optimal {Gear} {Shift} {Strategy} for
  {Dual} {Clutch} {Transmissions}},''
  \emph{\BIBforeignlanguage{en}{IFAC-PapersOnLine}}, vol.~50, no.~1, pp.
  4800--4805, Jul. 2017.

\bibitem{ye_multi-stage_2017}
J.~Ye, K.~Zhao, Y.~Liu, X.~Huang, and H.~Lin,
  ``\BIBforeignlanguage{en}{Multi-stage global trajectory optimization for the
  overlapping shift of a seamless two-speed transmission using {Legendre}
  pseudo-spectral method},'' \emph{\BIBforeignlanguage{en}{Advances in
  Mechanical Engineering}}, vol.~9, no.~12, Dec. 2017.

\bibitem{haj-fraj_optimal_2001}
A.~Haj-Fraj and F.~Pfeiffer, ``\BIBforeignlanguage{en}{Optimal control of gear
  shift operations in automatic transmissions},''
  \emph{\BIBforeignlanguage{en}{Journal of the Franklin Institute}}, vol. 338,
  no. 2-3, pp. 371--390, Mar. 2001.

\bibitem{walker_powertrain_2017}
P.~Walker, B.~Zhu, and N.~Zhang, ``\BIBforeignlanguage{en}{Powertrain dynamics
  and control of a two speed dual clutch transmission for electric vehicles},''
  \emph{\BIBforeignlanguage{en}{Mechanical Systems and Signal Processing}},
  vol.~85, pp. 1--15, Feb. 2017.

\bibitem{gao_optimal_2015}
B.~Gao, Y.~Xiang, H.~Chen, Q.~Liang, and L.~Guo,
  ``\BIBforeignlanguage{en}{Optimal {Trajectory} {Planning} of {Motor} {Torque}
  and {Clutch} {Slip} {Speed} for {Gear} {Shift} of a {Two}-{Speed} {Electric}
  {Vehicle}},'' \emph{\BIBforeignlanguage{en}{Journal of Dynamic Systems,
  Measurement, and Control}}, vol. 137, no.~6, p. 061016, Jun. 2015.

\bibitem{kim_gear_2017}
S.~Kim, J.~Oh, and S.~Choi, ``\BIBforeignlanguage{en}{Gear shift control of a
  dual-clutch transmission using optimal control allocation},''
  \emph{\BIBforeignlanguage{en}{Mechanism and Machine Theory}}, vol. 113, pp.
  109--125, Jul. 2017.

\bibitem{holdstock_linear_2013}
T.~Holdstock, A.~Sorniotti, N.~Suryanto, L.~Shead, F.~Viotto, C.~Cavallino, and
  S.~Bertolotto, ``\BIBforeignlanguage{en}{Linear and non-linear methods to
  analyse the drivability of a through-the-road parallel hybrid electric
  vehicle},'' \emph{\BIBforeignlanguage{en}{International Journal of
  Powertrains}}, vol.~2, no.~1, p.~52, 2013.

\bibitem{bai_dynamic_2013}
S.~Bai, J.~Maguire, and H.~Peng, \emph{\BIBforeignlanguage{English}{Dynamic
  analysis and control system design of automatic transmissions}}.\hskip 1em
  plus 0.5em minus 0.4em\relax Warrendale, PA, USA: Society of Automotive
  Engineers, 2013.

\bibitem{dagci_hybrid_2018-1}
O.~H. Dagci, H.~Peng, and J.~W. Grizzle, ``Hybrid {Electric} {Powertrain}
  {Design} {Methodology} {With} {Planetary} {Gear} {Sets} for {Performance} and
  {Fuel} {Economy},'' \emph{IEEE Access}, vol.~6, pp. 9585--9602, 2018.

\bibitem{liu_synthesis_2018}
J.~Liu, L.~Yu, Q.~Zeng, and Q.~Li, ``\BIBforeignlanguage{en}{Synthesis of
  multi-row and multi-speed planetary gear mechanism for automatic
  transmission},'' \emph{\BIBforeignlanguage{en}{Mechanism and Machine
  Theory}}, vol. 128, pp. 616--627, Oct. 2018.

\bibitem{bayrak_topology_2016}
A.~E. Bayrak, Y.~Ren, and P.~Y. Papalambros, ``\BIBforeignlanguage{en}{Topology
  {Generation} for {Hybrid} {Electric} {Vehicle} {Architecture} {Design}},''
  \emph{\BIBforeignlanguage{en}{Journal of Mechanical Design}}, vol. 138,
  no.~8, p. 081401, Aug. 2016.

\bibitem{goss_design_2013}
J.~Goss, R.~Wrobel, P.~Mellor, and D.~Staton, ``The design of {AC} permanent
  magnet motors for electric vehicles: {A} design methodology,'' in \emph{2013
  {International} {Electric} {Machines} \& {Drives} {Conference}}.\hskip 1em
  plus 0.5em minus 0.4em\relax Chicago, IL, USA: IEEE, May 2013, pp. 871--878.

\end{thebibliography}

\end{document}